\def\ps@pprintTitle{%
	\let\@oddhead\@empty
	\let\@evenhead\@empty
	\def\@oddfoot{\reset@font\hfil\thepage\hfil}
	\let\@evenfoot\@oddfoot
}
\newtheorem{theorem}{Theorem}[section]
\newtheorem{corollary}{Corollary}[section]
\theoremstyle{plain}
\newtheorem{definition}{Definition}[section]
\newtheorem{example}{Example}[section]
\theoremstyle{remark}
\numberwithin{equation}{section}
\begin{document}
\begin{frontmatter}	
		\title{\textbf{Inaccuracy and divergence measures based on \\ survival extropy, their properties and applications in \\ testing and image analysis }}		
		\author{Saranya P.$ ^* $, S.M.Sunoj}
		\ead{smsunoj@cusat.ac.in, saranyapanat96@gmail.com}
		\cortext[cor1]{Corresponding author}
		\address{Department of Statistics,\\Cochin University of Science and Technology,\\Kerala, INDIA 682 022}
\begin{abstract}
This article introduces novel measures of inaccuracy and divergence based on survival extropy and their dynamic forms and explores their properties and applications. To address the drawbacks of asymmetry and range limitations, we introduce two measures: the survival extropy inaccuracy ratio and symmetric divergence measures. The inaccuracy ratio is utilized for the analysis and classification of images. A goodness-of-fit test for the uniform distribution is developed using the survival extropy divergence. Characterizations of the exponential distribution are derived using the dynamic survival extropy inaccuracy and divergence measures. The article also proposes non-parametric estimators for the divergence measures and conducts simulation studies to validate their performance. Finally, it demonstrates the application of symmetric survival extropy divergence in failure time data analysis.
\end{abstract}
\end{frontmatter}
\section{Introduction}
Consider an experimenter evaluating the probabilities for various possible outcomes of an experiment. Then the precision of finding the probability can be compromised either due to lack of sufficient information, or due to some of the information is incorrect. All statistical estimation and inference problems revolve around formulating probability statements that could be imprecise in either, or both, of these aspects. Based on this idea, \cite{kerridge1961inaccuracy} proposed a measure, known as the inaccuracy measure (or the Kerridge measure) between two probability functions. Let $X$ and $Y$ be two absolutely continuous random variables with cumulative distribution functions (CDFs) $F$ and $G$ and probability density functions (PDFs) $f$ and $g$ respectively. Then the Kerridge inaccuracy measure is defined as, $I(F, G) = -\int_{0}^{\infty} {f(x) \log g(x) dx}$. It can be defined as the sum of Shannon entropy (see \cite{shannon1948mathematical}) and Kullback-Leibler (KL) divergence (\cite{kullback1951information}) as, $I(F, G) = H(F) + K(F, G)$, where $H(F) = -\int_{0}^{\infty}{f(x) \log f(x) dx}$ is the Shannon entropy and $K(F, G) = \int_{0}^{\infty}{f(x) \log \frac{f(x)}{g(x)}dx}$ is the KL divergence. A key relationship between an inaccuracy measure and a measure of uncertainty (such as entropy or extropy) for a random variable is that the inaccuracy measure equals the corresponding uncertainty measure when the distributions are identical. {
Inaccuracy measures based on cumulative distribution function (CDF), replacing the PDF in $I(F, G)$ are also of great interest among researchers as they are in general, more stable since the distribution function is more regular because it is defined in an integral form unlike the density function defined as the derivative of the distribution (see \cite{rao2004cumulative}, \cite{taneja2012dynamic} and \cite{nair2022reliability}). 
Recently, the notion of extropy has been considered a new measure of uncertainty of a random variable defined as a complementary dual of the entropy by \cite{lad2015extropy}.
According to \cite{lad2015extropy}, the extropy of the random variable $X$ is defined as
\begin{equation}
    J(X)=-\frac{1}{2}\int_0^{\infty} f^2(x) dx.
\end{equation}
\cite{qiu2017extropy} derived the characterization outcomes and symmetric properties of the extropy of order statistics and record values. \cite{lad2015extropy} presented the properties of this measure, including the maximum extropy distribution and statistical applications. Also, \cite{qiu2018extropy} provided two estimators for the extropy of a continuous random variable. Extropy
and variational distance were compared by \cite{yang2019bounds}. They identified the distribution that, within a specified variation distance from any given probability distribution, achieves the minimum or maximum extropy among these distributions. A lifetime expression of the extropy of a mixed system was studied by \cite{qiu2019extropy}. Based on the survival function $\bar{F}$, \cite{jahanshahi2020cumulative} and \cite{abdul2021dynamic} defined and studied cumulative residual extropy (also called survival extropy) of $X$, 
\[
J_s(X)=-\frac{1}{2}\int_{0}^{\infty}\bar{F}^2(x)dx
\]
and dynamic survival extropy as $J_s(X;t)=-\frac{1}{2}\int_{t}^{\infty}\left(\frac{\bar{F}(x)}{\bar{F}(t)}\right)^2dx$.

\cite{jose2022renyi} provided the relevance of constructing ordered random variables from random sample in the context of Renyi entropy information contained in a random variable. \cite{hashempour2024dynamic} introduced an extropy based dynamic cumulative past inaccuracy measure. They studied the characterization problem and stochastic ordering for this measure. Also, \cite{mohammadi2022interval} proposed modified interval weighted cumulative residual and past extropies as generalized measure. Recently, \cite{hashempour2024new} provided a new measure of inaccuracy based
on extropy for record statistics between distributions of the nth upper (lower) record value
and parent random variable and discussed some properties. They defined the extropy-inaccuracy as follows.
\begin{equation}
    J(f,g)=-\frac{1}{2}\int_0^\infty f(x)g(x)dx.
\end{equation}
 Also, the discrimination information based
on extropy and inaccuracy between density functions $f(x)$ and $g(x)$ can be defined by 
\begin{equation}\label{eq:R3discrimeasuredensity}
    J(f|g)=\frac{1}{2}\int_0^\infty [f(x)-g(x)]f(x)dx.
\end{equation}
\cite{saranya2024relative} defined relative cumulative extropy, its dynamc forms, properties, and applications in testing uniformity. 
Extropy-based divergence measures are considerably less explored in the literature, and our focus is on developing effective discrimination measures based on extropy and its extensions. 
Motivated by (\ref{eq:R3discrimeasuredensity}), we derive an inaccuracy measure and a divergence measure based on survival functions of $X$ and $Y$, and explore its properties. The article is structured as follows: In Section 2, we introduce an inaccuracy measure called survival extropy inaccuracy and survival extropy inaccuracy ratio. Since the inaccuracy ratio is non-negative, it is used for the analysis and classification of images. In Section 3, we define survival extropy divergence and propose a test for goodness of fit for the uniform $U(0,b)$ distribution. Section 4 extends the defined measures to dynamic cases, specifically for residual lifetime random variables, and discusses their properties. Estimation and simulation studies are presented in Section 5. Application in real data is discussed in Section 6.
\section{Survival extropy inaccuracy and survival extropy inaccuracy ratio}
In this section, we introduce two new measures in information theory. The first one is the survival extropy inaccuracy measure which is given by the following definition.
\begin{definition}
Let $X$ and $Y$ be nonnegative continuous random variables with survival functions $\bar{F}$ and $\bar{G}$ respectively. Then the survival extropy inaccuracy (SEI) measure between the distributions is defined as
    \begin{equation}
\xi J_s(X,Y)=-\frac{1}{2}\int_0^\infty \bar{F}(x)\bar{G}(x)dx.
     \end{equation}
\end{definition}
$\xi J_s(X,Y)$ is always negative and is equal to $J_s(X)$ for $\bar{F}=\bar{G}$. This measure serves as a valuable tool for assessing errors in experimental results. The SEI quantifies the error or inaccuracy resulting from assuming 
$\bar{G}(x)$ instead of $\bar{F}(x)$. Additionally, this measure can be used to compare multiple approximations to determine which assumed model is closer to the true model.
Let $Y=aX+b$, $a>0$, $b\geq0$. We have $F_Y(x)=F_X(\frac{x-b}{a})$ implies 
\begin{equation*}
\xi J_s(X,Y)=-\frac{1}{2}\int_0^\infty \bar{F}(x)\bar{F}\left(\frac{x-b}{a}\right)dx.
     \end{equation*}
\begin{example}\label{Example:R3expinaccuracy}
    Let $X$ and $Y$ be two exponential random variables with survival functions  $e^{-{\lambda_1}x}$ and $e^{-{\lambda_2}x}$ respectively. Then, the survival extropy inaccuracy $\xi J_s(X,Y)=-\frac{1}{(\lambda_1+\lambda_2)}$. 
\end{example}
\cite{jahanshahi2020cumulative} derived the upper bound for cumulative residual extropy. Similarly, we derive a sufficient condition for survival extropy inaccuracy to be finite. 
\begin{theorem}
    Let $X$ and $Y$ be two non-negative random variables with finite variances. If for some $p>1/4$, $E[X^p]<+\infty$ and $E[Y^p]<+\infty$, then $\xi J(X,Y)\in (-\infty,0]$.
\end{theorem}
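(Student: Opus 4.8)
The statement has two parts: the sign bound $\xi J_s(X,Y)\le 0$ and the finiteness $\xi J_s(X,Y)>-\infty$. The first carries no real content: since $\bar F$ and $\bar G$ are survival functions they are non-negative, so the integrand $\bar F(x)\bar G(x)\ge 0$, whence $\xi J_s(X,Y)=-\tfrac12\int_0^\infty \bar F(x)\bar G(x)\,dx\le 0$. The whole work therefore lies in bounding $\int_0^\infty \bar F(x)\bar G(x)\,dx$ above by a finite quantity. The plan is to decouple the two survival functions by Cauchy--Schwarz and reduce the problem to the finiteness of the individual survival extropies $J_s(X)$ and $J_s(Y)$.

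Concretely, by the Cauchy--Schwarz inequality $\int_0^\infty \bar F\bar G\,dx\le\big(\int_0^\infty\bar F^2\,dx\big)^{1/2}\big(\int_0^\infty\bar G^2\,dx\big)^{1/2}$, so that $|\xi J_s(X,Y)|\le\sqrt{|J_s(X)|\,|J_s(Y)|}$, and it suffices to show $\int_0^\infty\bar F^2\,dx<\infty$ and $\int_0^\infty\bar G^2\,dx<\infty$. For this I would split $\int_0^\infty\bar F^2\,dx=\int_0^1\bar F^2\,dx+\int_1^\infty\bar F^2\,dx$; the first piece is at most $1$, and for the tail I would invoke Markov's inequality in the form $\bar F(x)=P(X>x)\le E[X^p]/x^p$, which gives $\bar F^2(x)\le (E[X^p])^2 x^{-2p}$ and hence a convergent tail integral once the decay exponent is large enough. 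The moment hypotheses $E[X^p],E[Y^p]<\infty$ are exactly what feed this step, while the finite-variance assumption provides a robust fallback, since $\bar F^2\le\bar F$ yields $\int_0^\infty\bar F^2\,dx\le\int_0^\infty\bar F\,dx=E[X]\le\sqrt{E[X^2]}<\infty$, and similarly for $G$.

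The main obstacle is the bookkeeping in the tail estimate: one must verify that the available moment order genuinely pushes the exponent in $\int_1^\infty x^{-2p}\,dx$ past the integrability threshold, and then combine the two one-variable bounds correctly through Cauchy--Schwarz. A cleaner route worth recording rewrites the cross term as a minimum: since $\bar F(x)\bar G(x)=P(X>x)\,P(Y>x)$, one has $\int_0^\infty\bar F\bar G\,dx=\int_0^\infty P(\min(X,Y)>x)\,dx=E[\min(X,Y)]\le E[X]<\infty$, collapsing the finiteness to the integrability of a single survival function. Either way, assembling the trivial sign bound with the finiteness bound places $\xi J_s(X,Y)$ in $(-\infty,0]$, as claimed.
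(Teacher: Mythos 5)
Your proof is correct, and it follows the same basic template as the paper's (split the integral at $1$, bound the piece over $[0,1]$ by $1$, control the tail by Markov's inequality, and note the sign bound is trivial from $\bar F\bar G\ge 0$), but your bookkeeping is sound where the paper's is not. The paper bounds the tail integrand by $\left[E[X^p]E[Y^p]/x^{2p}\right]^2$, i.e.\ it \emph{squares} the Markov bound, which is unjustified: Markov gives only $\bar F(x)\bar G(x)\le E[X^p]E[Y^p]/x^{2p}$, and the square appears to be an artifact of transplanting the one-sample proof for $\int_0^\infty \bar F^2(x)\,dx$ from \cite{jahanshahi2020cumulative}, where squaring is intrinsic to the integrand. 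Without that square, $\int_1^\infty x^{-2p}\,dx$ converges only for $p>1/2$ --- and the paper's own text even concedes ``the last integral is finite if $p>1/2$,'' contradicting the hypothesis $p>1/4$ in the statement. You flag exactly this threshold issue, and both of your repairs work: the Cauchy--Schwarz decoupling reduces everything to $\int_0^\infty\bar F^2\,dx$ and $\int_0^\infty\bar G^2\,dx$, where the squared Markov bound is legitimate (convergence for $p>1/2$, with the finite-variance hypothesis covering $1/4<p\le 1/2$ via $\bar F(x)\le E[X^2]/x^2$ or simply $\bar F^2\le\bar F$); and your minimum identity, $\int_0^\infty\bar F(x)\bar G(x)\,dx=E[\min(X',Y')]\le E[X]<\infty$, is the cleanest route of all, needing only a finite mean. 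One small caution there: since $X$ and $Y$ are not assumed independent, phrase the identity with independent copies $X',Y'$ having the given marginals (or avoid it entirely by $\bar G\le 1$, giving $\int_0^\infty\bar F\bar G\,dx\le E[X]$ directly). The upshot of your analysis, worth making explicit, is that the moment condition ``$p>1/4$'' is not what drives the theorem --- finite variance alone already yields $\xi J_s(X,Y)\in(-\infty,0]$ --- whereas the paper's proof as written does not establish the claim for $1/4<p\le 1/2$ without invoking that variance assumption.
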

\begin{proof}

    \begin{equation*}
        \begin{split}
            \int_0^\infty \bar{F}(x)\bar{G}(x)dx&=\int_0^1 \bar{F}(x)\bar{G}(x)dx+\int_1^\infty \bar{F}(x)\bar{G}(x)dx\\
            &\leq 1+\int_1^\infty \left[\frac{E[X^p]E[Y^p]}{x^{2p} }\right]^2 dx\\
            &\leq 1+(E[X^p]E[Y^p])^2\int_1^\infty \frac{1}{x^{4p}}dx\\
            &\leq 1+\frac{(E[X^p]E[Y^p])^2}{(4p-1)} < \infty.
        \end{split}
    \end{equation*}
    We use Markov inequality in the third step. The last integral is finite if $p>(1/2)$. So, 
    \[
    -\frac{1}{2} \int_0^\infty \bar{F}(x)\bar{G}(x)dx>-\infty.
    \]
    Since $0\leq\bar{F}(x)\bar{G}(x)$, we have $-\bar{F}(x)\bar{G}(x)\leq 0$, which implies $-(1/2)\int_0^\infty \bar{F}(x)\bar{G}(x)\leq 0$, if it exists and is the upperbound. Thus the result follows.
\end{proof}
Since $\xi J_s(X,Y)$ is always negative, we define a nonnegative measure useful for measuring discrepancy between images. 
Motivated by \cite{kayal2020quantile}, we define an inaccuracy ratio called cumulative extropy inaccuracy ratio which can be applied to image analysis as follows.
\begin{definition}
    Let $X$ and $Y$ be two non-negative continuous random variables with survival functions $\bar{F}$ and $\bar{G}$ respectively. Then the survival extropy inaccuracy ratio between the distributions is defined as
    \begin{equation}
I \xi(X,Y)=\frac{\xi J_s(X,Y)}{J_s(X)}.
         \end{equation}  
 \end{definition}
 
 $I \xi(X,Y)$ is always positive and is equal to 1 for $\bar{F}=\bar{G}$. Also, it provides an adimensional
measure of closeness between $X$ and $Y$. Further, $I \xi(X,Y)$ is not symmetric since $I \xi(X,Y)\neq I \xi(Y,X)$. Note that $I \xi(X,Y)$ can be interpreted as the amount of dissimilarity carried by the survival extropy when the true survival function $\bar{F}$ is replaced by another survival function $\bar{G}$.
\begin{example}
 In the Example \ref{Example:R3expinaccuracy}, the survival extropy inaccuracy ratio is given by 
 \[
  I \xi (X,Y)=-\frac{{1/(\lambda_1+\lambda_2)}}{-1/4\lambda_1}={\frac{2\lambda_1}{\lambda_1+\lambda_2}}
 \]
\end{example}
\begin{theorem}
     If $\bar{F}(x)\geq \bar{G}(x)$, then $I \xi(X,Y)\leq I \xi(Y,X)$.
 \end{theorem}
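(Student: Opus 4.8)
The plan is to exploit the symmetry of the survival extropy inaccuracy in its two arguments, which collapses the whole comparison into a comparison of the two denominators. First I would write both ratios explicitly from the definitions,
\[
I\xi(X,Y)=\frac{\xi J_s(X,Y)}{J_s(X)}, \qquad I\xi(Y,X)=\frac{\xi J_s(Y,X)}{J_s(Y)}.
\]
The key observation is that the integrand $\bar{F}(x)\bar{G}(x)$ is symmetric in $\bar{F}$ and $\bar{G}$, so $\xi J_s(X,Y)=\xi J_s(Y,X)=-\tfrac{1}{2}\int_0^\infty \bar{F}(x)\bar{G}(x)\,dx$. Thus the two numerators are identical, and the inequality $I\xi(X,Y)\le I\xi(Y,X)$ is entirely governed by the relationship between $J_s(X)$ and $J_s(Y)$.

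Next I would establish the ordering of the denominators from the hypothesis. Since $\bar{F}$ and $\bar{G}$ are survival functions they are nonnegative, so $\bar{F}(x)\ge \bar{G}(x)\ge 0$ gives $\bar{F}^2(x)\ge \bar{G}^2(x)$ pointwise; integrating over $[0,\infty)$ yields $\int_0^\infty \bar{F}^2(x)\,dx \ge \int_0^\infty \bar{G}^2(x)\,dx$, and multiplying by $-\tfrac{1}{2}$ gives $J_s(X)\le J_s(Y)\le 0$.

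The final step is careful sign bookkeeping. Writing $A=\xi J_s(X,Y)<0$ for the common numerator and noting $J_s(X)\le J_s(Y)<0$, both ratios are positive and equal to $|A|/|J_s(X)|$ and $|A|/|J_s(Y)|$ respectively. Because $\int_0^\infty \bar{F}^2\,dx \ge \int_0^\infty \bar{G}^2\,dx$ forces $|J_s(X)|\ge |J_s(Y)|$, the ratio with the larger denominator is the smaller one, so $I\xi(X,Y)\le I\xi(Y,X)$, as claimed.

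I expect no genuine obstacle here: once the symmetry of the numerator is noticed, the argument reduces to a one-line monotonicity fact for squares of nonnegative functions together with the elementary inequality for reciprocals of positive numbers. The only point demanding care is tracking signs, since $\xi J_s$ and $J_s$ are both negative, so that dividing does not inadvertently flip the direction of the inequality; I would neutralize this by passing to absolute values as above, and I would implicitly assume the distributions are nondegenerate so that $J_s(X),J_s(Y)\neq 0$ and the ratios are well defined.
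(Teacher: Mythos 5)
Your proof is correct and follows essentially the same route as the paper: square the pointwise inequality $\bar{F}(x)\geq\bar{G}(x)$, integrate to compare $J_s(X)$ and $J_s(Y)$, and use the symmetry of the common numerator $\xi J_s(X,Y)=\xi J_s(Y,X)$ to reduce everything to the denominators. Your explicit sign bookkeeping in fact tidies up the paper's one-line argument, which asserts $J_s(X)\geq J_s(Y)$ --- backwards under the paper's convention that both survival extropies are negative; the correct ordering is $J_s(X)\leq J_s(Y)\leq 0$, i.e.\ $|J_s(X)|\geq|J_s(Y)|$, exactly as you derive, and it still yields $I\xi(X,Y)\leq I\xi(Y,X)$.
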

 \begin{proof}
     If $\bar{F}(x)\geq \bar{G}(x)$, then $\bar{F}^2(x)\geq \bar{G}^2(x)$, which implies $J_s(X)\geq J_s(Y)$ and we get $I \xi(X,Y)\leq I \xi(Y,X)$. 
 \end{proof}

\subsection{Classification of images using survival extropy inaccuracy ratio}

In this section, we illustrate the usefulness of the survival extropy inaccuracy ratio for classification of image data sets.  We use chineese MNIST data by \cite{kaggle_chinese_mnist} for our analysis. We consider images of five numbers in the Chinese scripts corresponding to 0, 1, 2, 3, and 4. For convenience, we denote the images corresponding to each of the number scripts by $\bold{0}$, $\bold{1}$, $\bold{2}$, $\bold{3}$, and $\bold{4}$ respectively. For analyzing the images, we first converted each
image into $28\times 28$ pixel data with 784 cells. The grey level of each cell is measured by a real number ranging from 0 (black) to 1 (white). 

Figure \ref{Fig:ChineeseMNISTimage} represents a sample of 6 images of each of the first four natural numbers in Chinese scripts, 0, 1, 2, 3 and 4. The first column of the figure represents $\bold{0}$, the second column is $\bold{1}$, and so on. We have taken 500 samples of images of $\bold{0}$, $\bold{1}$, $\bold{2}$, $\bold{3}$, and $\bold{4}$, converted them into numerical data, ranging from 0 to 1 and estimated $I \xi(X,Y)$. We estimate the survival extropy inaccuracy for each pair of images such as $(\bold{0}, \bold{0}), (\bold{0}, \bold{1}), (\bold{0}, \bold{2}), (\bold{0}, \bold{3})$ and $(\bold{0}, \bold{4})$, and determine the discrepancy of images $\bold{0}, \bold{1}, \bold{2}, \bold{3}, \bold{4}$ from $\bold{0}$, based on $\hat{I} \xi(X,Y)$. The closer the estimate is to 1, the more similar the images are. This means that an estimate approaching 1 suggests a high degree of resemblance between the images being compared.  Table \ref{Table:Inaccuracy ratio chineeseMNIST} shows the values of estimator $\hat{I}\xi(X,Y)$ of a random image $\bold{0}$ with $\bold{0}$, $\bold{1}$, $\bold{2}$, $\bold{3}$ and $\bold{4}$ respectively. The estimator $\hat{I}\xi(X,Y)$ is defined as follows. 
\begin{figure}[h!]
\centering
\includegraphics[scale=1.5]{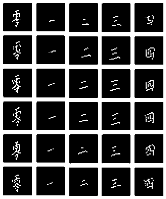}
\caption{Chineese MNIST image sample}
\label{Fig:ChineeseMNISTimage}
\end{figure}
  Let $X_1, X_2, X_3, \cdots, X_n$ be a random sample drawn from a population having survival function $\bar{F}$ and $Y_1, Y_2, Y_3, \cdots, Y_n$ be a random sample drawn from a population having survival function $\bar{G}$. Let $X_{(i)}$ and $Y_{(i)}$ be the order statistic of $X$ and $Y$ respectively. Then the empirical plug-in estimator for $I \xi(X,Y)$ is given by
\[
 \hat{I} \xi(X,Y)=\frac{\sum_{i=1}^{n-1}P(X\geq X_{(i)})P(Y\geq X_{(i)})(X_{(i+1)}-X_{(i)})}{\sum_{i=1}^{n-1}(P(X\geq X_{(i)}))^2 (X_{(i+1)}-X_{(i)})}.
\]
\begin{table}[h!]
\centering
\begin{tabular}{|c|c|c|c|c|c|}
\hline
&\includegraphics[width=0.05\textwidth]{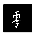}0&\includegraphics[width=0.04\textwidth]{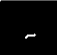}1&\includegraphics[width=0.045\textwidth]{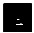}2& \includegraphics[width=0.04\textwidth]{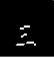}3 &\includegraphics[width=0.05\textwidth]{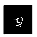}4 \\
\hline
\includegraphics[width=0.05\textwidth]{Zero.png}0 & 1 & 0.7194&0.7498&0.7676& 0.7967\\
\hline
\end{tabular}
\caption{Inaccuracy ratio estimate of images}
\label{Table:Inaccuracy ratio chineeseMNIST}
\end{table}
\begin{figure}{h!}
    \centering
    \includegraphics[width=0.5\textwidth]{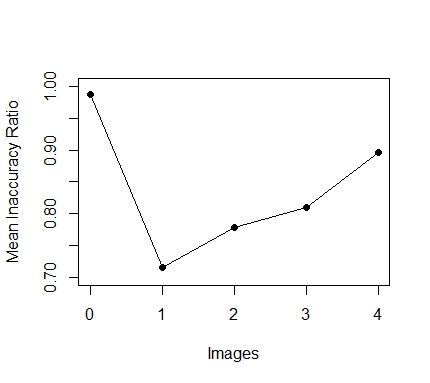}
    \caption{Mean Inaccuracy ratio estimates}
    \label{Fig:R3Mean Inaccuracy ratio}
\end{figure}
 It shows that $\hat{I}\xi(X,Y)$ between the same $\bold{0}$ and $\bold{0}$ is 1 and between $\bold{0}$ and others are deviated from 1 in an increasing pattern.
Now, we estimate the inaccuracy ratio of all combinations of 500 images of 0 with other numbers 1, 2, 3, and 4, and calculate the mean of estimates which are given in Table \ref{Table:Mean of inaccuracy Chinese MNIST}. The ratio is close to 1 for $\bold{0}$, 0.7163 for $\bold{1}$ and increases for $\bold{2},\bold{3}$ and $\bold{4}$. Figure \ref{Fig:R3Mean Inaccuracy ratio} shows the trend of the mean of inaccuracy ratio between $(\bold{0}$ with all other images $\bold{1}, \bold{2}, \bold{3}$ and $\bold{4}$.

\begin{table}[h!]
\centering
\begin{tabular}{|c|c|c|c|c|c|}
\hline
&$\bold{0}$&$\bold{1}$&$\bold{2}$& $\bold{3}$ &$\bold{4}$ \\
\hline
$\bold{0}$ & 0.9865 & 0.7163 &0.7784& 0.8093& 0.8957\\

\hline
\end{tabular}
\caption{Mean of Inaccuracy ratio estimates of 500 images}
\label{Table:Mean of inaccuracy Chinese MNIST}
\end{table}
Next, our attempt is to discriminate and classify the images using the survival extropy inaccuracy ratio. First, we divide the set of images into two groups; the training set and the test set. We have 500 samples of each number 0, 1, 2, 3 and 4. The ratio of division is taken as 3:2 such that the training set of each number consists of 300 images and the test set consists of 200 images. From Table \ref{Table:Mean of inaccuracy Chinese MNIST}, one can find an increasing trend in the mean inaccuracy ratio from $(\bold{0},\bold{1})$ to the inaccuracy of $(\bold{0},\bold{4})$, and thus propose $\hat{I}\xi(\bold{0},\bold{A})<\hat{I}\xi(\bold{0},\bold{B}), \; A, B = 1,2, 3; \; A < B $ as a general criterion for classifying an image from a pair of images in the test set.  Now using the training set, we obtain the values of inaccuracy ratios of 300 $\times$ 300 = 90000 pairs for each images and Table \ref{Table:R3Probability of different events} provides the probability corresponding to each event. The probability of $\hat{I}\xi(\bold{0},\bold{1})<\hat{I}\xi(\bold{0},\bold{2})$ is 0.99, the probability of the event $\hat{I}\xi(\bold{0},\bold{1})<\hat{I}\xi(\bold{0},\bold{3})$ is 1, and the probability of $\hat{I}\xi(\bold{0},\bold{1})<\hat{I}\xi(\bold{0},\bold{4})$ is 1. Next, we try to classify the test set based on the above criterion. Let $n_1$, $n_2$, $n_3$ and $n_4$ denote the sample size of $\bold{1}$, $\bold{2}$, $\bold{3}$ and $\bold{4}$ respectively. Let $N$ denote the total number of images selected for each classification. The Table \ref{Table:R3Accuracy of classification of 1&2} gives the accuracy of classification of images to $\bold{1}$ and $\bold{2}$. 
\begin{table}[h!]
\centering
\begin{tabular}{|c|c|c|}
\hline
Event& No. of outcomes of the event & Probability of the event\\
\hline
$\hat{I}\xi(\bold{0},\bold{1})<\hat{I}\xi(\bold{0},\bold{2})$&89970&0.9997\\
$\hat{I}\xi(\bold{0},\bold{1})<\hat{I}\xi(\bold{0},\bold{3})$& 90000&1 \\
$\hat{I}\xi(\bold{0},\bold{1})<\hat{I}\xi(\bold{0},\bold{4})$&90000 &1\\
$\hat{I}\xi(\bold{0},\bold{2})<\hat{I}\xi(\bold{0},\bold{3})$&50400&0.56\\ $\hat{I}\xi(\bold{0},\bold{2})<\hat{I}\xi(\bold{0},\bold{4})$&75000&0.8333\\
$\hat{I}\xi(\bold{0},\bold{3})<\hat{I}\xi(\bold{0},\bold{4})$&63300&0.70333 \\
\hline
\end{tabular}
\caption{Probability of different events based on training set}
\label{Table:R3Probability of different events}
\end{table}
Considering that the inaccuracy between $\bold{0}$ and $\bold{1}$ is less than the inaccuracy between $\bold{0}$ and $\bold{2}$ and likewise, the classification is made based on the comparison of estimate values. Specifically, for a pair of images $(\bold{A},\bold{B})$, if the inaccuracy ratio between $\bold{0}$ and $\bold{A}$ is less than the inaccuracy ratio between $\bold{0}$ and $\bold{B}$ then the $\bold{A}$ is classified into group $\bold{1}$ and $\bold{B}$ is classified into group $\bold{2}$, and check with their original groups to verify whether the classification is true. Tables \ref{Table:R3Accuracy of classification of 1&3} and \ref{Table:R3Accuracy of classification of 1&4} show the accuracy of classification of $\bold{1}$ \& $\bold{3}$ and $\bold{1}$ \& $\bold{4}$ respectively.
\begin{table}[h!]
\centering
\begin{tabular}{|c|c|c|c|c|}
\hline
$(n_1,n_2)$& $N$& Truely classified as $\bold{1}$ &Truely classified as $\bold{2}$ & Accuracy\\
\hline
$(25,25)$ & 50 & 25 &25 & 1\\ 
\hline
$(50,50)$ &100 & 50 &50 &1\\
\hline
$(100,100)$ &200 & 97 & 97 & 0.97\\
\hline
$(200,200)$ &400 &198 &198&0.99\\
\hline
\end{tabular}
\caption{Accuracy of classification of $\bold{1}$ and $\bold{2}$.}
\label{Table:R3Accuracy of classification of 1&2}
\end{table}

\begin{table}[h!]
\centering
\begin{tabular}{|c|c|c|c|c|}
\hline
$(n_1,n_2)$& $N$& Truely classified as $\bold{1}$ &Truely classified as $\bold{3}$ & Accuracy\\
\hline
$(25,25)$ & 50 & 25 &25 & 1\\ 
\hline
$(50,50)$ &100 & 50 &50 &1\\
\hline
$(100,100)$ &200 & 99 & 99 & 0.99\\
\hline
$(200,200)$ &400 &198 &198&0.99\\
\hline
\end{tabular}
\caption{Accuracy of classification of $\bold{1}$ and $\bold{3}$.}
\label{Table:R3Accuracy of classification of 1&3}
\end{table}
\begin{table}[h!]
\centering
\begin{tabular}{|c|c|c|c|c|}
\hline
$(n_1,n_2)$& $N$& Truely classified as $\bold{1}$ &Truely classified as $\bold{4}$ & Accuracy\\
\hline
$(25,25)$ & 50 & 25 &25 & 1\\ 
\hline
$(50,50)$ &100 & 50 &50 &1\\
\hline
$(100,100)$ &200 & 100 & 100 & 1\\
\hline
$(200,200)$ &400 &199 &199&0.995\\
\hline
\end{tabular}
\caption{Accuracy of classification of $\bold{1}$ and $\bold{4}$.}
\label{Table:R3Accuracy of classification of 1&4}
\end{table}

\section{Survival extropy divergence}
\begin{definition}
 Let $X$ and $Y$ be non-negative continuous random variables with survival functions $\bar{F}$ and $\bar{G}$ respectively. Then the survival extropy divergence measure between the distributions is defined as
    \begin{equation}
SJ(\bar{F}|\bar{G})=\frac{1}{2}\int_0^\infty (\bar{F}(x)-\bar{G}(x))\bar{F}(x)dx.
         \end{equation}   
\end{definition}
We have
\[
SJ(\bar{F}|\bar{G})=\xi J_s(X,Y)-J_s(X).
\]
\begin{example}
    Let $X$ and $Y$ follows exponential distribution with mean $1/\lambda_1$ and $1/\lambda_2$ respectively. We have the survival extropy divergence of $X$ and $Y$ as
    \[
    SJ(\bar{F}|\bar{G})=\frac{1}{2(\lambda_1+\lambda_2)}-\frac{1}{4\lambda_1}
    \]
    and extropy divergence as
    \[
    J(f|g)=\frac{\lambda_1\lambda_2}{2(\lambda_1+\lambda_2)}-\frac{\lambda_1}{4}
    \]
    Figure \ref{fig:R3survivalextropy} presents plots illustrating the extropy divergence and survival extropy divergence for various values of $\lambda_2$ at $\lambda_1=1$ and $\lambda_1=2$ in Figure \ref{fig:R3a} and Figure \ref{fig:R3b} respectively. These plots reveal distinct monotonic behaviors: $J(f|g)$ exhibits non-decreasing trends, while  $SJ(\bar{F},\bar{G})$ shows non-increasing trends. Specifically, for $\lambda_2=1$ in Figure \ref{fig:R3a} and $\lambda_2=2$ in Figure \ref{fig:R3b}, both divergences are zero, indicating identical distributions.
    \begin{figure}[h!]
  \centering
  \begin{subfigure}[b]{0.45\textwidth}
    \centering
    \includegraphics[width=\textwidth]{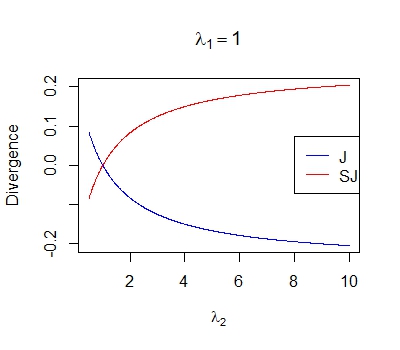}
    \caption{$J(f|g)$ and $SJ(\bar{F},\bar{G})$  for $\lambda_1=1$.}
    \label{fig:R3a}
  \end{subfigure}
  \hfill
  \begin{subfigure}[b]{0.45\textwidth}
    \centering
    \includegraphics[width=\textwidth]{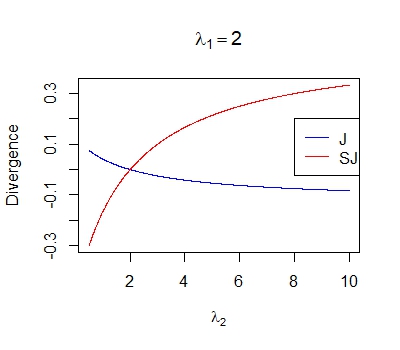}
    \caption{$J(f|g)$ and $SJ(\bar{F},\bar{G})$ for $\lambda_1=2$.}
    \label{fig:R3b}
  \end{subfigure}
  \caption{Extropy divergence and survival extropy divergence of two exponential distributions.}
  \label{fig:R3survivalextropy}
\end{figure}
\end{example}
In the following theorems we prove some properties of survival extropy divergence.
\begin{theorem}
    If $\bar{F}(x)\leq(\geq)\bar{G}(x)$, for every $x$, then $SJ(\bar{F}|\bar{G})\leq(\geq)0$.
\end{theorem}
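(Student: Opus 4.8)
The plan is to argue directly from the sign of the integrand, exploiting the fact that a survival function is nonnegative. Recall the definition $SJ(\bar{F}|\bar{G})=\frac{1}{2}\int_0^\infty (\bar{F}(x)-\bar{G}(x))\bar{F}(x)\,dx$, so the whole question reduces to determining the sign of $\int_0^\infty (\bar{F}(x)-\bar{G}(x))\bar{F}(x)\,dx$, since the prefactor $1/2$ is positive and does not affect the inequality.

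First I would treat the case $\bar{F}(x)\leq\bar{G}(x)$ for every $x$. Here the pointwise difference satisfies $\bar{F}(x)-\bar{G}(x)\leq 0$ for all $x\geq 0$. Because $\bar{F}$ is a survival function we also have $\bar{F}(x)\geq 0$ for all $x$, so the product of the nonpositive factor $\bar{F}(x)-\bar{G}(x)$ with the nonnegative factor $\bar{F}(x)$ is nonpositive:
\[
(\bar{F}(x)-\bar{G}(x))\bar{F}(x)\leq 0\qquad\text{for all }x\geq 0.
\]
By monotonicity of the integral, integrating a nonpositive integrand over $[0,\infty)$ yields a nonpositive value, whence $SJ(\bar{F}|\bar{G})\leq 0$.

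The reverse case $\bar{F}(x)\geq\bar{G}(x)$ for every $x$ is entirely symmetric: now $\bar{F}(x)-\bar{G}(x)\geq 0$, and multiplying by the nonnegative $\bar{F}(x)$ gives a nonnegative integrand, so $SJ(\bar{F}|\bar{G})\geq 0$. There is essentially no obstacle in this argument; the only point worth stating explicitly is the nonnegativity of $\bar{F}$, which is what lets the sign of the difference $\bar{F}-\bar{G}$ pass unchanged through multiplication and integration. One might add, for completeness, that finiteness of the integral is guaranteed under the moment hypotheses used earlier for $\xi J_s(X,Y)$, so the inequality is between genuine (finite) quantities rather than a trivial statement about $\pm\infty$.
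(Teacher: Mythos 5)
Your proof is correct and takes essentially the same route as the paper: the paper multiplies the pointwise inequality $\bar{F}(x)\leq(\geq)\,\bar{G}(x)$ by the nonnegative factor $\bar{F}(x)$ to obtain $\bar{F}^2(x)\leq(\geq)\,\bar{F}(x)\bar{G}(x)$ and integrates, which is exactly your sign argument on the integrand $(\bar{F}(x)-\bar{G}(x))\bar{F}(x)$ written as a comparison of two integrals rather than a single signed one. Your closing remark on finiteness is a harmless extra precaution the paper does not bother with.
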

\begin{proof}
    If $\bar{F}(x) \leq (\geq) \bar{G}(x)$, then $\bar{F}^2(x) \leq (\geq) \bar{F}(x)\bar{G}(x)$. Integrating both sides,
    \[
    \int_0^\infty \bar{F}^2(x)dx \leq (\geq) \int_0^\infty\bar{F}(x)\bar{G}(x)dx.
    \]
    which implies $SJ(\bar{F}|\bar{G})\leq (\geq) 0$.
\end{proof}
\begin{theorem}
 Let $X$ and $Y$ be two continous non-negative random variables with survival functions $\bar{F}$ and $\bar{G}$ respectively. The sum of survival extropy divergence between $(\bar{F}+\bar{G})/2$ and $\bar{G}$, and  between $(\bar{F}+\bar{G})/2$ and $\bar{F}$ is always equal to zero. That is,
 \begin{equation}
   SJ\left(\frac{\bar{F}+\bar{G}}{2}|\bar{G}\right)+SJ\left(\frac{\bar{F}+\bar{G}}{2}|\bar{F}\right)=0.  
 \end{equation}
\end{theorem}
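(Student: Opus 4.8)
The plan is to establish the identity by substituting directly into the definition of the survival extropy divergence and exploiting that both summands carry the same first argument. Write $H:=(\bar{F}+\bar{G})/2$; since a convex combination of survival functions is again a survival function, $H$ is a legitimate survival function and both terms $SJ(H|\bar{G})$ and $SJ(H|\bar{F})$ are well defined. By the defining formula,
\[
SJ(H|\bar{G})=\frac{1}{2}\int_0^\infty \bigl(H(x)-\bar{G}(x)\bigr)H(x)\,dx,\qquad SJ(H|\bar{F})=\frac{1}{2}\int_0^\infty \bigl(H(x)-\bar{F}(x)\bigr)H(x)\,dx.
\]

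Next I would add the two expressions and pull the common factor $H(x)$ out of the combined integrand, obtaining
\[
SJ(H|\bar{G})+SJ(H|\bar{F})=\frac{1}{2}\int_0^\infty H(x)\bigl[2H(x)-\bar{F}(x)-\bar{G}(x)\bigr]\,dx.
\]
The decisive step is the observation that, by the very definition of $H$, one has $2H(x)=\bar{F}(x)+\bar{G}(x)$, so the bracketed factor $2H(x)-\bar{F}(x)-\bar{G}(x)$ vanishes identically in $x$. Hence the integrand is the zero function, the integral is zero, and the claimed identity follows.

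I do not anticipate any real obstacle here: because the combined integrand is identically zero \emph{before} integration, the argument requires no finiteness or integrability hypothesis on $\bar{F}$ or $\bar{G}$ and places no restriction beyond their being survival functions, so the splitting and recombination of the two integrals is unconditionally valid. If anything, the only point deserving a sentence is the remark above that $H=(\bar{F}+\bar{G})/2$ is itself a bona fide survival function (monotone, right-continuous, with the correct limits at $0$ and $\infty$), which guarantees that the left-hand side is meaningful to begin with.
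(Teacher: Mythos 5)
Your proof is correct, but it takes a different route than the paper's. You merge the two integrals first and observe that, with $H=(\bar{F}+\bar{G})/2$, the combined integrand $\frac{1}{2}H(x)\bigl[2H(x)-\bar{F}(x)-\bar{G}(x)\bigr]$ vanishes identically, so nothing needs to be evaluated. The paper instead computes each summand in closed form via a difference-of-squares factorization,
\[
SJ\left(\frac{\bar{F}+\bar{G}}{2}\,\Big|\,\bar{G}\right)=\frac{1}{2}\int_0^\infty\frac{\bar{F}(x)-\bar{G}(x)}{2}\cdot\frac{\bar{F}(x)+\bar{G}(x)}{2}\,dx=\frac{1}{8}\int_0^\infty\bigl(\bar{F}^2(x)-\bar{G}^2(x)\bigr)\,dx=\frac{1}{4}\bigl(J_s(Y)-J_s(X)\bigr),
\]
and symmetrically $SJ\bigl(\tfrac{\bar{F}+\bar{G}}{2}|\bar{F}\bigr)=\frac{1}{4}\bigl(J_s(X)-J_s(Y)\bigr)$, so the two terms are exact negatives. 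What the paper's computation buys is more than the identity: it exhibits the value of each summand in terms of the survival extropies $J_s(X)$ and $J_s(Y)$, which ties these divergences to the survival extropy ordering used elsewhere in the paper; what your argument buys is brevity and a transparent pointwise cancellation.

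One caveat: your closing claim that the argument "requires no finiteness or integrability hypothesis" overshoots. The theorem asserts that a sum of two separately defined integrals is zero, and passing from $SJ(H|\bar{G})+SJ(H|\bar{F})$ to a single integral of the pointwise sum uses linearity, which is valid only when each integral exists finitely. If, for instance, $\int_0^\infty\bar{F}^2(x)\,dx=\infty$ while $\int_0^\infty\bar{G}^2(x)\,dx<\infty$, then $SJ(H|\bar{G})=+\infty$ and $SJ(H|\bar{F})=-\infty$, and the left-hand side is undefined even though your combined integrand is identically zero. Finiteness of $J_s(X)$ and $J_s(Y)$ (guaranteed, e.g., by the moment condition in the paper's earlier finiteness theorem) suffices to legitimize the step; the paper's own proof carries the same implicit assumption, so this is a shared, minor gap rather than a defect specific to your approach.
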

\begin{proof}
    Since,
    \begin{equation}\label{Eq:R3sumofdivergence1}
    \begin{split}
      SJ\left(\frac{\bar{F}+\bar{G}}{2}|\bar{G}\right)&=\frac{1}{2}\int_0^\infty \left(\frac{\bar{F}(x)-\bar{G}(x)}{2}\right)\left(\frac{\bar{F}(x)+\bar{G}(x)}{2}\right)dx \\
      &=\frac{1}{2}\int_0^\infty \frac{\bar{F}^2(x)-\bar{G}^2(x)}dx=\frac{1}{4}(J_s(Y)-J_s(X)).
      \end{split}
    \end{equation}
 Similarly, we have
 \begin{equation}\label{Eq:R3sumofdivergence2}
 SJ\left(\frac{\bar{F}+\bar{G}}{2}|\bar{F}\right)=\frac{1}{4}(J_s(X)-J_s(Y)).
 \end{equation}
 From (\ref{Eq:R3sumofdivergence1}) and (\ref{Eq:R3sumofdivergence2}), we get
 \[
 SJ\left(\frac{\bar{F}+\bar{G}}{2}|\bar{G}\right)=-SJ\left(\frac{\bar{F}+\bar{G}}{2}|\bar{F}\right).
 \]
\end{proof}
\begin{theorem}\label{Theorem:R3survextdivbetF+GandF}
The survival extropy divergence between $\bar{F}$ and $\bar{G}$ is equal to twice the survival extropy divergence between $\bar{F}$ and $(\bar{F}+\bar{G})/2$.
  \begin{equation*}
      SJ(\bar{F}|\bar{G})=2\times SJ\left(\bar{F}|\frac{\bar{F}+\bar{G}}{2}\right).
  \end{equation*}  
\end{theorem}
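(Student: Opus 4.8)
The plan is a direct computation from the definition, exploiting the fact that the divergence $SJ(\bar{F}\,|\,\cdot\,)$ depends on its second argument only through the linear difference term appearing in the integrand, while the weight $\bar{F}(x)$ and the outer factor $\tfrac12$ are held fixed. First I would substitute the mixture $(\bar{F}+\bar{G})/2$ into the second slot of the defining integral, writing
\[
SJ\!\left(\bar{F}\,\Big|\,\frac{\bar{F}+\bar{G}}{2}\right)=\frac{1}{2}\int_0^\infty \left(\bar{F}(x)-\frac{\bar{F}(x)+\bar{G}(x)}{2}\right)\bar{F}(x)\,dx.
\]

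The key step is to simplify the inner factor. A short algebraic manipulation gives $\bar{F}(x)-\tfrac{1}{2}\big(\bar{F}(x)+\bar{G}(x)\big)=\tfrac{1}{2}\big(\bar{F}(x)-\bar{G}(x)\big)$, so that the extra constant $\tfrac12$ pulls out of the integrand. Hence
\[
SJ\!\left(\bar{F}\,\Big|\,\frac{\bar{F}+\bar{G}}{2}\right)=\frac{1}{2}\cdot\frac{1}{2}\int_0^\infty \big(\bar{F}(x)-\bar{G}(x)\big)\bar{F}(x)\,dx=\frac{1}{2}\,SJ(\bar{F}\,|\,\bar{G}),
\]
where the last equality is just the definition of $SJ(\bar{F}\,|\,\bar{G})$ read off directly. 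Multiplying both sides by $2$ then yields the claimed identity $SJ(\bar{F}\,|\,\bar{G})=2\times SJ\!\left(\bar{F}\,|\,(\bar{F}+\bar{G})/2\right)$.

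I do not expect any genuine obstacle here: the statement is an immediate consequence of the algebraic structure of the definition, namely that replacing $\bar{G}$ by the midpoint $(\bar{F}+\bar{G})/2$ halves the discrepancy $\bar{F}-\bar{G}$ while leaving the fixed weighting $\bar{F}$ untouched. The only point worth a moment's care is the bookkeeping of the two factors of $\tfrac12$ (one from the definition of $SJ$, one arising from the mixture), but this is routine. No integrability concern arises separately, since the integral on the right is the same one that defines $SJ(\bar{F}\,|\,\bar{G})$ on the left.
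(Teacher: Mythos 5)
Your proof is correct, and it is essentially the paper's own argument in the only sense available: the paper states this theorem without any proof, evidently treating as immediate exactly the one-line computation you give --- substitute $(\bar{F}+\bar{G})/2$ into the second slot of the definition, simplify $\bar{F}-\tfrac{1}{2}(\bar{F}+\bar{G})=\tfrac{1}{2}(\bar{F}-\bar{G})$, and pull the resulting $\tfrac{1}{2}$ outside the integral. One point in your favour worth recording: your bookkeeping of the factors is the correct one, since it yields $SJ\left(\bar{F}\,\middle|\,\tfrac{\bar{F}+\bar{G}}{2}\right)=\tfrac{1}{2}\,SJ(\bar{F}|\bar{G})$ in agreement with the theorem as stated, whereas the paper's later displays (\ref{Equation:R3surextdivpart1}) and (\ref{Equation:R3surextdivpart2}), which assert the factor $\tfrac{1}{4}$, are inconsistent with this theorem (and consequently the coefficient $\tfrac{1}{8}$ in the rewritten symmetric divergence should be $\tfrac{1}{4}$); so do not be misled by that discrepancy into doubting your computation.
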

\begin{corollary}
\begin{equation*}
  SJ \left(\frac{\bar{F}+\bar{G}}{2}|\bar{F} \right)+SJ\left(\frac{\bar{F}+\bar{G}}{2}|{\bar{G}}\right) = \frac{1}{2} (SJ(\bar{F}|\bar{G})+SJ(\bar{G}|\bar{F})).
\end{equation*}
   
\end{corollary}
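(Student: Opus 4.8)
The plan is to obtain this identity as a direct consequence of the preceding Theorem \ref{Theorem:R3survextdivbetF+GandF}, which already rewrites a divergence taken against a pure survival function in terms of the divergence taken against the average $(\bar F + \bar G)/2$. The idea is to handle the two left-hand terms symmetrically: apply Theorem \ref{Theorem:R3survextdivbetF+GandF} once to the pair $(\bar F,\bar G)$ and once with the roles of $\bar F$ and $\bar G$ swapped. Since the average $(\bar F + \bar G)/2$ is invariant under this swap, the second application is legitimate and produces the companion relation with $\bar G$ in place of $\bar F$.

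Concretely, Theorem \ref{Theorem:R3survextdivbetF+GandF} gives $SJ(\bar F \,|\, (\bar F+\bar G)/2) = \tfrac12\, SJ(\bar F | \bar G)$, and the swapped version gives $SJ(\bar G \,|\, (\bar F+\bar G)/2) = \tfrac12\, SJ(\bar G | \bar F)$. Adding these two relations immediately yields $\tfrac12\big(SJ(\bar F | \bar G) + SJ(\bar G | \bar F)\big)$, which is exactly the claimed right-hand side. The only remaining task is to confirm, by substituting the definition $SJ(\bar P | \bar Q) = \tfrac12\int_0^\infty (\bar P - \bar Q)\bar P\,dx$, that the two divergences appearing on the left of the corollary coincide with the two mixed divergences just reduced.

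I expect the genuine obstacle to lie not in any integration but in the bookkeeping forced by the asymmetry of $SJ$. Because the weighting factor $\bar P$ multiplying the difference $(\bar P - \bar Q)$ depends on which argument is placed first, the reduction above is valid only when the pure survival function sits in the first slot and the average in the second; the reverse placement changes which of the integrals $\int_0^\infty \bar F^2\,dx$, $\int_0^\infty \bar G^2\,dx$, $\int_0^\infty \bar F \bar G\,dx$ (equivalently $J_s(X)$, $J_s(Y)$, $\xi J_s(X,Y)$) survive after cancellation. I would therefore keep these asymmetric weightings explicit throughout, and as an independent check expand every $SJ$ on both sides into this common basis of three integrals and match coefficients term by term, which pins down the identity without relying on the earlier theorem.
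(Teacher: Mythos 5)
Your reduction via Theorem \ref{Theorem:R3survextdivbetF+GandF} is the right engine, and it is evidently the intended derivation (the paper states this corollary with no proof at all): applying the theorem to $(\bar F,\bar G)$ and to the swapped pair gives $SJ\bigl(\bar F\,\big|\,\tfrac{\bar F+\bar G}{2}\bigr)=\tfrac12 SJ(\bar F|\bar G)$ and $SJ\bigl(\bar G\,\big|\,\tfrac{\bar F+\bar G}{2}\bigr)=\tfrac12 SJ(\bar G|\bar F)$, and adding produces the claimed right-hand side. However, the step you defer as ``the only remaining task'' --- confirming that these two mixed divergences coincide with the two terms printed on the left of the corollary --- is exactly where the argument breaks, and it cannot be repaired, because in the printed statement the average occupies the \emph{first} slot of $SJ$, not the second. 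Carrying out the expansion you propose: $SJ\bigl(\tfrac{\bar F+\bar G}{2}\,\big|\,\bar F\bigr)=\tfrac18\int_0^\infty(\bar G^2-\bar F^2)\,dx$ and $SJ\bigl(\tfrac{\bar F+\bar G}{2}\,\big|\,\bar G\bigr)=\tfrac18\int_0^\infty(\bar F^2-\bar G^2)\,dx$, so the printed left-hand side is identically zero --- this is precisely the content of the unlabelled theorem immediately preceding Theorem \ref{Theorem:R3survextdivbetF+GandF} (see (\ref{Eq:R3sumofdivergence1})--(\ref{Eq:R3sumofdivergence2})). Meanwhile the right-hand side equals
\[
\frac{1}{2}\bigl(SJ(\bar F|\bar G)+SJ(\bar G|\bar F)\bigr)=\frac{1}{4}\int_0^\infty\bigl(\bar F(x)-\bar G(x)\bigr)^2dx,
\]
which is strictly positive unless $\bar F=\bar G$ almost everywhere. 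So the corollary as printed is false except in the degenerate case; the correct statement has the slots swapped, namely $SJ\bigl(\bar F\,\big|\,\tfrac{\bar F+\bar G}{2}\bigr)+SJ\bigl(\bar G\,\big|\,\tfrac{\bar F+\bar G}{2}\bigr)=\tfrac12\bigl(SJ(\bar F|\bar G)+SJ(\bar G|\bar F)\bigr)$, and that is exactly what your first two steps already prove.

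In short, you correctly diagnosed the one genuine danger --- the asymmetric weighting $\bar P$ attached to the first argument of $SJ$ --- but stopped at announcing the coefficient check instead of executing it; had you expanded both sides into the basis $\int\bar F^2$, $\int\bar G^2$, $\int\bar F\bar G$ as you suggested, the mismatch (and hence the typo in the statement) would have surfaced immediately. One further point in your favor: the factor $\tfrac12$ you extracted from Theorem \ref{Theorem:R3survextdivbetF+GandF} is the correct one, as direct computation $SJ\bigl(\bar F\,\big|\,\tfrac{\bar F+\bar G}{2}\bigr)=\tfrac14\int_0^\infty(\bar F-\bar G)\bar F\,dx$ confirms, whereas the paper's own subsequent equations (\ref{Equation:R3surextdivpart1})--(\ref{Equation:R3surextdivpart2}) assert an inconsistent factor $\tfrac14$; your proof of the corrected identity is therefore sound where the paper's surrounding bookkeeping is not.
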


The asymmetric nature of survival extropy divergence measure gives unequal values for $SJ(\bar{F}|\bar{G})$ and $SJ(\bar{G},\bar{F})$. It depends on the order of comparison, which makes them harder to interpret and understand. \cite{taneja2005generalized} introduced generalized symmetric divergence measures based on some asymmetric divergence measures. Symmetric measures treat both distributions equally which ensures a balance. This can be particularly important in comparisons where no single entity should be given precedence. Let $P={\{p_1,p_2,...,p_n\}}$ and $Q=\{q_1,q_2,...,q_n\}$ be two complete finite discrete probability distributions. Then the Jensen Shannon divergence (see \cite{sibson1969information}, \cite{burbea1982entropy}, \cite{burbea1982convexity}) between $P$ and $Q$ is given by 
\[
I(P||Q)=\frac{1}{2}\left( \sum_{i=1}^n p_i \ln \frac{2p_i}{p_i+q_i}+\sum_{i=1}^n q_i \ln \frac{2q_i}{p_i+q_i}\right)
\]
Using Kullback-Leibler divergence (\cite{kullback1951information}), it can be written as 
\[
I(P||Q)=\frac{1}{2}\left(K\left(P||\frac{P+Q}{2}\right) +K\left(Q||\frac{P+Q}{2}\right) \right).
\]
Motivated by this, we define a symmetric survival extropy divergence as follows.
\begin{definition}
  Let $X$ and $Y$ be two non-negative continuous random variables with survival functions $\bar{F}$ and $\bar{G}$ respectively. Let $SJ(\bar{F}|\bar{G})$ and $SJ(\bar{G}|\bar{F})$ be the corresponding survival extropy divergence measures. Then the symmetric survival extropy divergence measure between the distributions is defined as
    \begin{equation}\label{Equation:R3symdiv}
SSJ(\bar{F},\bar{G})=\frac{1}{2}\left(SJ\left(\bar{F}|\frac{\bar{F}+\bar{G}}{2}\right)+SJ\left(\bar{G}|\frac{\bar{F}+\bar{G}}{2}\right)\right).
    \end{equation}     
\end{definition}
From Theorem \ref{Theorem:R3survextdivbetF+GandF}, we have 
\begin{equation}\label{Equation:R3surextdivpart1}
SJ\left(\bar{F}|\frac{\bar{F}+\bar{G}}{2}\right)=\frac{1}{4}SJ(\bar{F}|\bar{G})
\end{equation}
and
\begin{equation}\label{Equation:R3surextdivpart2}
SJ\left(\bar{G}|\frac{\bar{F}+\bar{G}}{2}\right)=\frac{1}{4}SJ(\bar{G}|\bar{F}).
\end{equation}
Substituting (\ref{Equation:R3surextdivpart1}) and (\ref{Equation:R3surextdivpart2}) in (\ref{Equation:R3symdiv}), we rewrite the symmetric survival extropy divergence, $SSJ(\bar{F},\bar{G})$ as
\begin{equation}
 SSJ(\bar{F},\bar{G})=\frac{1}{8}\left( SJ(\bar{F}|\bar{G})+SJ(\bar{G}|\bar{F})\right).   
\end{equation}
 We have 
\[
SSJ(\bar{F},\bar{G})=SSJ(\bar{G},\bar{F}).
\]
Next, we develop a test for goodness of fit for uniform, $U(0,b)$ distribution using $SJ(\bar{F},\bar{G})$.
\subsection{Testing goodness of fit for uniform distribution}
Divergence measures give the discrepancy between two distributions. In this section, we used the idea of finding the discrepancy between an assumed distribution and an estimated one to define a test for goodness of fit for uniform $U(0,b)$ distribution. An important property of uniform distribution is that it obtains the maximum cumulative extropy among all distributions that possess a distribution function $F$ and have a given support on $(0, b)$.

\par Let $X_1, X_2,..., X_n$ be non-negative, independent, and identically distributed random variables from an absolutely continuous cdf $F$ with order statistics, $X_{(1)}\leq X_{(2)},...\leq X_{(n)}$. We choose $b=2E[X]$. Let $F_0(x)=x/b, \; 0\leq x\leq b$, denote the  uniform distribution $U(0,b)$.
Consider the testing problem with the hypothesis
\[
H_0: F(x)=F_0(x),\ \   vs.\ \   H_1: F(x)\neq F_0(x).
\]
Under the null hypothesis, $J(\bar{F}|\bar{F_0})=0$, and a large value of $J(\bar{F}|\bar{F_0})$ leads us to reject the null hypothesis $H_0$. To compute $SJ(\bar{F}|\bar{F}_0)$, complete information about $F$ and $F_0$ is required which is unavailable in testing problems, we develop a test statistic based on it.

Since, $J_s(\bar{F}_0)=-\frac{b}{6}$ and $\bar{F}_0=1-(x/b)$, we have
\begin{equation*}
    \begin{split}
         J(\bar{F}_0|\bar{F})&=-J_s(\bar{F}_0)-\int_0^\infty \bar{F}(x) \left(1-\frac{x}{b}\right)dx\\
        &=\frac{b}{6}-\int_0^\infty\bar{F}(x)+\frac{1}{2b}\int_0^\infty x\bar{F}(x)dx\\
        &=\frac{b}{6}-\frac{E[X]}{2}+\frac{E[X^2]}{4b}.
    \end{split}
\end{equation*}
Under $H_0$, $E[X]=\frac{b}{2}$, $E[X^2]=\frac{b^2}{3}$ and $J_s(\bar{F})=-\frac{b}{6}$. Replacing $b$ by $2\bar{x}$, $E[X]$ by $\bar{x}$ and $E[X^2]$ by $\frac{1}{n}\sum_{i=1}^{n}X_{i}^2$, we get a plug-in estimator of $ J(\bar{F}|\bar{F}_0)$ which is used as our test statistic 
\[
T_n=\frac{\sum_{i=1}^{n}{x_i}^2}{8n\bar{x}}-\frac{\bar{x}}{6}.
\]
We have $\frac{1}{n}\sum_{i=1}^{n}X_{i}$ and $\frac{1}{n}\sum_{i=1}^{n-1}X_i^2$ which are consistent estimators of $E[X]$ and $E[X^2]$ respectively. Evidently, $T_n$ is a consistent estimator of $SJ(\bar{F}_0|\bar{F})$ and under $H_0$, $T_n\rightarrow 0$ in probability. The $T_n$ is used as the test statistic such that we reject $H_0$ at the level of significance $\alpha$ if $T_n\geq T_{n,(1-\alpha)}$, where $T_{n,(1-\alpha)}$ is the $100(1-\alpha)$ percentile of $T_n$ under $H_0$. Since the distribution of $T_n$ under the null hypothesis has not been obtained analytically, the percentage point $T_{n,(1-\alpha)}$ is determined by Monte Carlo simulations.  We have generated 10,00000 samples of size $n$ and calculated the critical values for each $\alpha$. Table \ref{Table:CriticalvaluesofgoodnessoffitU(0,1)} gives the critical values of $U(0,1)$ distribution for different values of $n$ and $\alpha$ and Table \ref{Table:Powercomparisonuniformitywithbeta} gives the power of the test when the alternative distribution is considered as beta distribution.\\
 \begin{table}[h!]
	\begin{center}
		\caption{Critical values of the test statistic $T_n$ for $U(0,1)$}
		\label{Table:CriticalvaluesofgoodnessoffitU(0,1)} 		
 \scalebox{1.2}{ \begin{tabular}{p{0.5cm} p{2.2cm}  p{2.3cm} p{0.5cm} p{2.2cm} p{2.2cm}}
			\hline
                & & \ \ \ \ \ \ \ \ \ \ \ \ $T_n$  & & \\
                \hline
$n$ & $\alpha=0.01$ & $\alpha=0.05$ &$n$ & $\alpha=0.01$ & $\alpha=0.05$ \\
\hline
10 & 0.02166838 & 0.0144495& 28 & 0.01317029 & 0.009024229\\
11 & 0.02081598 & 0.01385329 &29 & 0.01294684 & 0.00880072 \\
12 & 0.01988166 & 0.01333844 &30  & 0.01269727 & 0.008714309\\
13 & 0.0191062  & 0.01288266 &31 & 0.0125022  & 0.008541361 \\
14 & 0.01859912 & 0.01244292 & 32 & 0.01239294 & 0.008475774 \\
15 & 0.0180311  & 0.01210819 &33 & 0.01222981 & 0.008400736 \\
16 & 0.01745607 & 0.01171947 &34  & 0.01205469 & 0.008233354 \\
17 & 0.01688279 & 0.01142615 &35  & 0.01184186 & 0.008175282 \\
18 & 0.01630101 & 0.0111104  &36 & 0.01160249 & 0.008006654 \\
19 & 0.01598378 & 0.01082509 &37& 0.01142032 & 0.00788299 \\
20 & 0.01550527 & 0.01053819 &38 & 0.01139521 & 0.007763844 \\
21 & 0.01521614 & 0.01030295&39 & 0.01111382 & 0.00765478  \\
22 & 0.01485849 & 0.0101308 &40  & 0.01108748 & 0.007624532 \\  
23 & 0.01448316 & 0.009873566&45&0.01045511&0.007266659 \\
24  & 0.01407422 & 0.009723266&50& 0.009864687 & 0.006856017 \\
25  & 0.01388488 & 0.009479421&55& 0.009403397 & 0.006559372 \\
26  & 0.01350353 & 0.009340834&60& 0.009110397 & 0.006312815 \\
27  & 0.01340029 & 0.009223798&100& 0.007045608 & 0.004870007 \\
			\hline
		\end{tabular}}  
	\end{center}
\end{table}	
\begin{table}[h]
    \begin{center}
    \caption{Power estimates for alternative beta distribution for different sample sizes and $\alpha$.}
		\label{Table:Powercomparisonuniformitywithbeta} 
    \begin{tabular}{ |p{3cm}| p{1.5cm} p{1cm}| p{1cm} p{1cm}| p{1cm} p{1cm}| p{1cm} p{1cm}}
    \hline
    \multicolumn{1}{|c|}{}&\multicolumn{2}{c|}{$n=10$} & \multicolumn{2}{c|}{$n=20$} &\multicolumn{2}{c|}
    {$n=40$}\\
    \hline
Distribution & {0.01} & {0.05} & {0.01} & {0.05}&{0.01} & {0.05} \\
         \hline
         $\beta(0.5,1)$& 0.305&0.551&0.619&0.832&0.919&0.981\\
         $\beta(0.1,0.5)$&0.869&0.924&0.992&0.997&1&1\\
         $\beta(0.3,0.5)$&0.585&0.770&0.858&0.941&0.987&0.997\\
         \hline
          \end{tabular}
    \end{center}
\end{table}
Next, we compare the powers of our test statistic for sample sizes 10, 20, and 40 with some well-known test statistics under the same alternatives for $\alpha=0.01$ and $\alpha=0.05$. We consider the Kolmogorov-Smirnov statistic (see \cite{an1933sulla} and \cite{smirnov1939estimate}), Anderson-Darling statistic (see \cite{anderson1954test}), Cramer-von Mises statistic (see \cite{cramer1928composition} and \cite{margenau1932richard}), Zamanzade statistic (see \cite{zamanzade2015testing}), and an extropy based test proposed by \cite{qiu2018extropy}.
\begin{itemize}
    \item[(i)]   Kolmogorov-Smirnov statistic
    \begin{equation*}
        KS=\max\left(\max_{1\leq i\leq n}\left(\frac{i}{n}-X_{(i)}\right), \max_{1\leq i\leq n}\left(X_{(i)}-\frac{i-1}{n}\right)\right).
    \end{equation*}
    \item [(ii)] Anderson-Darling statistic
    \begin{equation*}
        AD=-\frac{2}{n}\sum_{i=1}^n\left(\left(i-\frac{1}{2}\right)\log X_{(i)}+\left(n-i+\frac{1}{2}\right)\log(1-X_{(i)}\right)-n.
    \end{equation*}
    \item [(iii)] Cramer-von Mises statistic
    \begin{equation*}
        CM=\sum_{i=1}^n\left(X_{(i)}-\frac{2i-1}{2n}\right)+\frac{1}{12n}.
    \end{equation*}
    \item [(iv)] Zamanzade statistic
    \begin{equation*}
        TB=\sum_{i=1}^n \log\left(\frac{X_{(i+m)}-X_{(i-m)}}{\hat{F_n}(X_{(i+m)})-\hat{F_n}(X_{(i-m)})}\right)\left(\frac{\hat{F_n}(X_{(i+m)})-\hat{F_n}(X_{(i-m)})}{\sum_{i=1}^n(\hat{F_n}(X_{(i+m)})-\hat{F_n}(X_{(i-m)}))}\right),
    \end{equation*}
    where $m$ is the window size and $\hat{F_n}$ is the cdf estimator given by 
    \begin{equation*}
       \hat{F_n}(X_{(i)})=\frac{1}{n+2}\left( i+\frac{X_{(i)}-X_{(i-1)}}{X_{(i+1)}-X_{(i-1)}}\right), i=1,2,...,n. 
    \end{equation*}
    \item [(v)] Extropy based test statistic
    \begin{equation*}
        TU=\frac{1}{2n}\sum_{i=1}^n\frac{c_im/n}{X_{(i+m)}-X_{(i-m)}},
    \end{equation*}
    where $m$ is the window size and 
    \begin{equation*}
    c_i=    \begin{cases}
  1+\frac{i-1}{m}, & \mbox{  $ 1\leq i \leq m $,}\\
  2, & \mbox{  $m+1<i<n-m$,}\\
  1+\frac{n-i}{m}, & \mbox{  $n-m+1\leq i \leq n.$}
  \end{cases}
    \end{equation*}
\end{itemize}
The following probability distribution is also used as alternatives in testing uniformity, due to  \cite{dudewicz1981entropy} and \cite{zamanzade2015testing}, respectively given by 

   $C_k$: $k=1.5, 2$
    \begin{equation*}
    F(x)= \begin{cases}
  0.5-2^{k-1}(0.5-x)^k, & \mbox{  $ 0\leq x \leq 0.5 $,} \\
  0.5+2^{k-1}(x-0.5)^k, & \mbox{  $0.5<x\leq 1$.} 
  \end{cases}
    \end{equation*}

\begin{table}[]
    \begin{center}
    \caption{Power estimates of uniformity for different tests for different sample sizes and $\alpha$.}
		\label{Table:Powercomparisonuniformity} 
    \begin{tabular}{|p{1cm}| p{1cm}| p{1cm} p{1cm}| p{1cm} p{1cm}|}
    \hline
    \multicolumn{1}{|c|}{} & \multicolumn{1}{c|}{} &\multicolumn{2}{c|}
    {$C_{1.5}$}&\multicolumn{2}{c|}{$C_2$}\\
    \hline
$n$ & Test &{0.01} & {0.05}&{0.01} & {0.05} \\
         \hline
         10 & KS & 0.032&0.113&0.067&0.199\\
         &AD&0.033&0.121&0.072&0.227\\
         &CM&0.026&0.098&0.047&0.155\\
         &TB&0.005&0.028&0.009&0.035\\
         &TU&0.023&0.089&0.040&0.151\\
         &$T_n$& \textbf{0.058}&\textbf{0.186}&\textbf{0.165}&\textbf{0.368}\\
         \hline
         20&KS&0.048&0.150&0.124&0.315\\
         &AD&0.045&0.156&0.131&0.371\\
         &CM&0.031&0.119&0.072&0.250\\
         &TB&0.006&0.029&0.014&0.057\\
         &TU&0.042&0.155&0.131&0.363\\
         &$T_n$&\textbf{0.106}&\textbf{0.284}&\textbf{0.317}&\textbf{0.556}\\
         \hline
         40&KS&0.078&0.230&0.260&0.561\\
         &AD&0.065&0.241&0.330&0.695\\
         &CM&0.045&0.175&0.176&0.557\\
         &TB&0.013&0.066&0.085&0.290\\
         &TU&0.129&0.317&0.511&0.782\\
         &$T_n$&\textbf{0.202}&\textbf{0.434}&\textbf{0.568}&\textbf{0.787}\\
         \hline
    \end{tabular}
    \end{center}
\end{table}
Table \ref{Table:Powercomparisonuniformity} presents the power estimates of various tests for two different alternatives and $\alpha$ values. For the alternatives $C_{1.5}$ and $C_2$, $T_n$ demonstrates higher power estimates compared to all other existing tests, for both $\alpha=0.01$ and $\alpha=0.05$, across all sample sizes ($n$) included in the study. 

\section{Dynamic extropy inaccuracy and cumulative extropy divergence measures}
In many areas such as reliability, survival analysis, economics, etc., the length of time during a study and information about the residual lifetime are essential. In such situations, the information measures are functions of time and thus they are dynamic. In this section, we define the dynamic version of survival extropy inaccuracy and divergence and study their important properties.
\begin{definition}
Let $X_{t} = (X - t|X \geq t)$ and $Y_{t} = (Y|Y \geq t)$  be the residual life random variables of $X$ and $Y$, respectively. Then the dynamic survival extropy inaccuracy (DSEI) between $X_{t}$ and $Y_{t}$ is defined as
\begin{equation}\label{Eq:R3dynamicinaccuracy}
   \xi J_s(\bar{F},\bar{G},t)=-\frac{1}{2}\int_t^\infty \frac{\bar{F}(x)}{\bar{F}(t)}\frac{\bar{G}(x)}{\bar{G}(t)}dx. 
\end{equation}    
\end{definition}
$ \xi J_s(\bar{F},\bar{G},t)$ is always negative and is equal to $J_s(X,t)$ when $\bar{F}(\cdot)=\bar{G}(\cdot)$.
We can write the dynamic survival extropy inaccuracy in terms of survival extropy inaccuracy as follows.
\begin{equation*}
    \begin{split}
     \xi J_s(\bar{F},\bar{G},t)&=-\frac{1}{2}\int_0^\infty \frac{\bar{F}(x)}{\bar{F}(t)}\frac{\bar{G}(x)}{\bar{G}(t)}dx+\frac{1}{2}\int_0^t \frac{\bar{F}(x)}{\bar{F}(t)}\frac{\bar{G}(x)}{\bar{G}(t)}dx. \\ 
     &=\frac{1}{\bar{F}_X(t)\bar{G}_Y(t)}(\xi J_s(X,Y)+\frac{1}{2}\int_0^t \bar{F}_X(x)\bar{G}_Y(x)dx)\\
     &=p(t)(\xi J_s(X,Y)+q(t)),
    \end{split}
\end{equation*}
where $p(t)=\frac{1}{\bar{F}_X(t)\bar{G}_Y(t)}$ and $q(t)=\frac{1}{2}\int_0^t \bar{F}_X(x)\bar{G}_Y(x)dx)$.
The following theorem gives the differential equation of DSEI.
\begin{theorem}
    $\xi J_s(\bar{F},\bar{G},t)$ satisfies the following differential equation.
    \begin{equation}\label{inaccuracydiff}
        \frac{d}{dt}\xi J_s(\bar{F},\bar{G},t)=(h_X(t)+h_Y(t))\xi J_s(\bar{F},\bar{G},t)+\frac{1}{2},
    \end{equation}
    where $h_X (t) = -\frac{\partial}{\partial t} \log \bar{F}(t)$ and $h_Y (t) = -\frac{\partial}{\partial t} \log \bar{G}(t)$ respectively denote the hazard rates of $X$ and $Y$.
\end{theorem}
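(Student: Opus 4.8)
The plan is to differentiate the definition (\ref{Eq:R3dynamicinaccuracy}) directly, treating $\xi J_s(\bar{F},\bar{G},t)$ as a product of the normalizing factor $1/(\bar{F}(t)\bar{G}(t))$ and the tail integral $\int_t^\infty \bar{F}(x)\bar{G}(x)\,dx$. Writing $A(t)=1/(\bar{F}(t)\bar{G}(t))$ and $\psi(t)=\int_t^\infty \bar{F}(x)\bar{G}(x)\,dx$, so that $\xi J_s(\bar{F},\bar{G},t)=-\tfrac12 A(t)\psi(t)$, I would apply the product rule and handle the two factors separately. First I would compute $\psi'(t)$: since $t$ appears only as the lower limit of the integral, the fundamental theorem of calculus gives $\psi'(t)=-\bar{F}(t)\bar{G}(t)$, and the sign is the only thing to watch here.

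Next I would compute $A'(t)$ using the hazard-rate identities. From $h_X(t)=-\frac{\partial}{\partial t}\log\bar{F}(t)$ one gets $\bar{F}'(t)=-h_X(t)\bar{F}(t)$, and similarly $\bar{G}'(t)=-h_Y(t)\bar{G}(t)$. Differentiating the product $\bar{F}(t)\bar{G}(t)$ and then passing to its reciprocal yields the clean relation $A'(t)=(h_X(t)+h_Y(t))A(t)$, which is precisely the step that injects the hazard rates into the equation.

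Finally I would assemble the pieces from $\frac{d}{dt}\xi J_s=-\tfrac12\big(A'(t)\psi(t)+A(t)\psi'(t)\big)$. Substituting $A'(t)=(h_X+h_Y)A(t)$ and $\psi'(t)=-\bar{F}(t)\bar{G}(t)$, and using the two simplifications $A(t)\bar{F}(t)\bar{G}(t)=1$ and $-\tfrac12 A(t)\psi(t)=\xi J_s(\bar{F},\bar{G},t)$, the first term collapses to $(h_X+h_Y)\xi J_s$ and the second to the constant $\tfrac12$, giving (\ref{inaccuracydiff}). Alternatively, the representation $\xi J_s(\bar{F},\bar{G},t)=p(t)(\xi J_s(X,Y)+q(t))$ already derived above makes the computation even shorter, since $\xi J_s(X,Y)$ is a constant, $p'(t)=(h_X+h_Y)p(t)$, and $q'(t)=\tfrac12\bar{F}(t)\bar{G}(t)$ with $p(t)\bar{F}(t)\bar{G}(t)=1$.

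There is no genuine obstacle here, as the result is essentially an exercise in the product rule and the Leibniz rule for a variable lower limit. The only thing to be careful about is sign bookkeeping, especially the minus sign arising from differentiating a tail integral and the two minus signs in the hazard-rate identities; a single slip there would flip the sign of the additive constant or of the hazard-rate coefficient.
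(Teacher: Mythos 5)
Your proposal is correct and takes essentially the same route as the paper: both differentiate the definition (\ref{Eq:R3dynamicinaccuracy}) directly, applying the product/quotient rule to the normalizing factor $1/(\bar{F}(t)\bar{G}(t))$ and the tail integral, and then substitute the hazard-rate identities $\bar{F}'(t)=-h_X(t)\bar{F}(t)$ and $\bar{G}'(t)=-h_Y(t)\bar{G}(t)$. Your explicit sign bookkeeping is in fact cleaner than the paper's single displayed intermediate step, which as printed evaluates to $-(h_X(t)+h_Y(t))\,\xi J_s(\bar{F},\bar{G},t)-\tfrac{1}{2}$ (an overall sign slip, besides a typo), even though the final equation the paper states is the correct one that your computation reproduces.
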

\begin{proof}
Differentiating (\ref{Eq:R3dynamicinaccuracy}), we get
\[
\frac{d}{dt}\xi J_s(\bar{F},\bar{G},t)=-\frac{1}{2}\left(\frac{-f(t)\bar{G}(t)-g(t\bar{F}(t)}{(\bar{F}(t)\bar{G}(t))^2}
\int_t^\infty\bar{F}(x)\bar{G}(x)dx+1\right)
\]
Replacing $f(t)/\bar{F}(t)$ by $h_X(t)$, $g(t)/\bar{G}(t)$ by $h_Y(t)$ and substituting $\xi J(\bar{F},\bar{G},t)$, we get the equation (\ref{Eq:R3dynamicinaccuracy}).
\end{proof}
\begin{theorem}
    DSEI is nondecreasing (nonincreasing) if and only if 
    \begin{equation}
        \xi J_s(\bar{F},\bar{G},t)\geq(\leq)\frac{-1}{2(h_X(t)+h_Y(t))}
    \end{equation}
\end{theorem}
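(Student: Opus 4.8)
The plan is to read off the result directly from the differential equation established in the preceding theorem, namely
\[
\frac{d}{dt}\xi J_s(\bar{F},\bar{G},t)=(h_X(t)+h_Y(t))\,\xi J_s(\bar{F},\bar{G},t)+\frac{1}{2}.
\]
By definition, DSEI is nondecreasing at $t$ precisely when $\frac{d}{dt}\xi J_s(\bar{F},\bar{G},t)\geq 0$, and nonincreasing precisely when this derivative is $\leq 0$. So the entire argument reduces to rewriting the sign condition on the derivative using the closed form above.

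First I would substitute the differential equation into the inequality $\frac{d}{dt}\xi J_s(\bar{F},\bar{G},t)\geq 0$, obtaining
\[
(h_X(t)+h_Y(t))\,\xi J_s(\bar{F},\bar{G},t)+\frac{1}{2}\geq 0,
\]
which is equivalent to $(h_X(t)+h_Y(t))\,\xi J_s(\bar{F},\bar{G},t)\geq -\tfrac{1}{2}$. Next I would invoke the positivity of the hazard rates: since $X$ and $Y$ are nonnegative continuous random variables, $h_X(t),h_Y(t)\geq 0$, and on the support where the measure is absolutely continuous we have $h_X(t)+h_Y(t)>0$, so the quantity appearing in the denominator of the claimed bound is well defined and positive. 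Dividing both sides by $h_X(t)+h_Y(t)$ (which preserves the direction of the inequality) yields exactly
\[
\xi J_s(\bar{F},\bar{G},t)\geq \frac{-1}{2(h_X(t)+h_Y(t))}.
\]
Every implication here is reversible, so this chain is an ``if and only if'', giving the nondecreasing case; replacing each $\geq$ by $\leq$ throughout handles the nonincreasing case verbatim, since dividing by the positive quantity again leaves the inequality sense unchanged.

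I do not expect a genuine obstacle, as the statement is an algebraic reformulation of an already-proved identity. The only point that deserves explicit care is the strict positivity of $h_X(t)+h_Y(t)$, which is what licenses both the division and the appearance of that sum in a denominator; I would state this assumption clearly (it holds wherever $\bar F(t),\bar G(t)>0$, i.e.\ on the interior of the common support) so that the equivalence is rigorous rather than merely formal.
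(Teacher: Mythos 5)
Your proof is correct and follows exactly the route the paper intends: the theorem is stated there without proof precisely because it is an immediate sign analysis of the differential equation $\frac{d}{dt}\xi J_s(\bar{F},\bar{G},t)=(h_X(t)+h_Y(t))\,\xi J_s(\bar{F},\bar{G},t)+\frac{1}{2}$ from the preceding theorem, which is what you carry out. Your explicit remark that $h_X(t)+h_Y(t)>0$ on the interior of the common support is a welcome precision that the paper leaves implicit.
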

The following theorem provides a characterization for exponential distribution using the dynamic extropy inaccuracy measure.
\begin{theorem}\label{R3chartheorem1}
    Let $X$ follow an exponential distribution. $\xi J(\bar{F},\bar{G},t)$ is a constant if and only if $Y$ is exponential.
\end{theorem}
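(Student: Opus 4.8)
The plan is to lean on the differential equation (\ref{inaccuracydiff}) for the dynamic survival extropy inaccuracy established just above, namely
\[
\frac{d}{dt}\xi J_s(\bar{F},\bar{G},t)=(h_X(t)+h_Y(t))\,\xi J_s(\bar{F},\bar{G},t)+\frac{1}{2}.
\]
Since $X$ is exponential, its hazard rate $h_X(t)=\lambda_1$ is constant, and I would treat the two implications separately.

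For the \emph{if} direction, suppose $Y$ is exponential with $\bar{G}(x)=e^{-\lambda_2 x}$. The cleanest route is a direct evaluation of the defining integral: with $\bar{F}(x)=e^{-\lambda_1 x}$, the substitution $u=x-t$ gives
\[
\xi J_s(\bar{F},\bar{G},t)=-\frac{1}{2}\int_t^\infty e^{-(\lambda_1+\lambda_2)(x-t)}\,dx=-\frac{1}{2(\lambda_1+\lambda_2)},
\]
which does not depend on $t$ and is therefore constant. Alternatively, with $h_Y\equiv\lambda_2$ the equation (\ref{inaccuracydiff}) becomes linear with constant coefficients, and the exponentially growing homogeneous mode is excluded by the boundedness of $\xi J_s$, again yielding the constant value above.

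For the \emph{only if} direction, assume $\xi J_s(\bar{F},\bar{G},t)\equiv c$ is constant. Its derivative then vanishes, so (\ref{inaccuracydiff}) reduces to
\[
0=(\lambda_1+h_Y(t))\,c+\frac{1}{2}.
\]
Here $c\neq 0$: either because $\xi J_s$ is always negative, or simply because $c=0$ would force the contradiction $0=\frac{1}{2}$. Hence $h_Y(t)=-\frac{1}{2c}-\lambda_1$ is independent of $t$, and a constant hazard rate characterizes the exponential distribution, so $Y$ is exponential.

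The argument is short once (\ref{inaccuracydiff}) is available, so there is no serious obstacle. The only points needing care are justifying $c\neq 0$ before dividing, handled as noted above, and invoking the standard characterization that a constant failure rate forces the exponential law.
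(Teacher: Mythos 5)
Your proposal is correct and takes essentially the same route as the paper: the only-if direction substitutes a constant $\xi J_s(\bar{F},\bar{G},t)=c$ into the differential equation \eqref{inaccuracydiff} and solves for $h_Y(t)=-\frac{1}{2c}-\lambda_1$, while the if direction is the direct computation $\xi J_s(\bar{F},\bar{G},t)=-\frac{1}{2(\lambda_1+\lambda_2)}$, which the paper relegates to the example immediately following its proof. Your explicit justification that $c\neq 0$ before dividing is a small point of rigor the paper leaves implicit.
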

\begin{proof}
 Let $X$ follow an exponential distribution so that $h_X(t)=\theta$, a constant. $\xi J_s(\bar{F},\bar{G},t)=c$, is a constant. Substituting in \eqref{inaccuracydiff}, we get 
 \[
 (\theta+h_Y(t))c=-\frac{1}{2}
 \]
 and
 \[
 h_Y(t)=-\frac{1}{2c}-\theta,
 \]
 implies $h_Y(t)$ is a constant which furthur implies $Y$ is exponential. The converse part is proved in the following example.
\end{proof}
\begin{example}
    Let $X$ and $Y$ follow exponential distributions with mean $1/\lambda_1$ and $1/\lambda_2$ respectively. $\xi J_s(\bar{F},\bar{G},t)=-1/2(\lambda_1+\lambda_2)$.
\end{example}
Next, we define the dynamic survival extropy divergence.
\begin{definition}
Let $X_{t}$ and $Y_{t}$  be the residual life random variables of $X$ and $Y$, respectively. Then the dynamic survival extropy divergence(DSED) between $X_{t}$ and $Y_{t}$ is defined as
\begin{equation}
    SJ_r(\bar{F_t}|\bar{G_t})=\frac{1}{2}\int_t^\infty \left(\frac{\bar{F}(x)}{\bar{F}(t)}-\frac{\bar{G}(x)}{\bar{G}(t)}\right)\frac{\bar{F}(x)}{\bar{F}(t)}dx. 
\end{equation}
\end{definition}
In terms of dynamic survival extropy and DSED, we can write it as follows.
\begin{equation}\label{eq:R3relation2}
SJ_r(\bar{F_t}|\bar{G_t})=\xi J_s(\bar{F},\bar{G},t)-J_s(X;t)    
\end{equation}
The following theorem gives the relationship of $SJ_r(\bar{F_t}|\bar{G_t})$ with hazard rates and dynamic survival extropy.
\begin{theorem}
  $SJ_r(\bar{F_t}|\bar{G_t})$ satisfies the differential equation 
  \begin{equation}\label{eq:R3diffeqn1}
     SJ'_r(\bar{F_t}|\bar{G_t})=(h_X(t)+h_Y(t))SJ_r(\bar{F_t}|\bar{G_t})+(h_Y(t)-h_X(t))J_s(X;t).
  \end{equation}
\end{theorem}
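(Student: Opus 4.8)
The plan is to exploit the additive relation \eqref{eq:R3relation2}, namely $SJ_r(\bar{F_t}|\bar{G_t})=\xi J_s(\bar{F},\bar{G},t)-J_s(X;t)$, which expresses the dynamic divergence as the difference between the dynamic inaccuracy and the dynamic survival extropy. Differentiating this identity term by term reduces the problem to knowing the derivatives of $\xi J_s(\bar{F},\bar{G},t)$ and of $J_s(X;t)$ separately. The first of these is already supplied by \eqref{inaccuracydiff}, so the only genuinely new computation is the differential equation satisfied by the dynamic survival extropy $J_s(X;t)$.

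First I would derive that differential equation for $J_s(X;t)$. Writing $J_s(X;t)=-\frac{1}{2\bar{F}^2(t)}\int_t^\infty \bar{F}^2(x)\,dx$ and differentiating by the quotient rule, using $\frac{d}{dt}\int_t^\infty\bar{F}^2(x)\,dx=-\bar{F}^2(t)$ together with $\frac{d}{dt}\bar{F}^2(t)=-2f(t)\bar{F}(t)$, I would obtain $\frac{d}{dt}J_s(X;t)=\frac{1}{2}-\frac{f(t)}{\bar{F}^3(t)}\int_t^\infty\bar{F}^2(x)\,dx$. Replacing $f(t)/\bar{F}(t)$ by $h_X(t)$ and recognising that $\frac{1}{\bar{F}^2(t)}\int_t^\infty\bar{F}^2(x)\,dx=-2J_s(X;t)$, this collapses to the self-referential form $\frac{d}{dt}J_s(X;t)=2h_X(t)J_s(X;t)+\frac{1}{2}$.

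Next I would assemble the pieces. Subtracting the two derivatives and noting that the two additive constants $\frac{1}{2}$ cancel gives $SJ'_r(\bar{F_t}|\bar{G_t})=(h_X(t)+h_Y(t))\xi J_s(\bar{F},\bar{G},t)-2h_X(t)J_s(X;t)$. Finally, eliminating $\xi J_s(\bar{F},\bar{G},t)$ by substituting $\xi J_s(\bar{F},\bar{G},t)=SJ_r(\bar{F_t}|\bar{G_t})+J_s(X;t)$ from \eqref{eq:R3relation2} and collecting the coefficient of $J_s(X;t)$, the combination $h_X(t)+h_Y(t)-2h_X(t)=h_Y(t)-h_X(t)$ emerges, yielding exactly \eqref{eq:R3diffeqn1}.

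The main obstacle is the middle step: correctly differentiating $J_s(X;t)$ and re-expressing the leftover integral term back in terms of $J_s(X;t)$ itself, so that the resulting identity closes on the dynamic survival extropy rather than on a bare integral. Once that step is secured, everything else is routine bookkeeping with the hazard rates and the linear relation \eqref{eq:R3relation2}. A purely direct differentiation of the defining integral of $SJ_r(\bar{F_t}|\bar{G_t})$ is also feasible, but it is more cumbersome since it forces one to handle the two quotient structures $\bar{F}(x)/\bar{F}(t)$ and $\bar{G}(x)/\bar{G}(t)$ simultaneously; routing the argument through \eqref{eq:R3relation2} and \eqref{inaccuracydiff} is considerably cleaner.
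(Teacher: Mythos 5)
Your proposal is correct and takes essentially the same route as the paper: differentiate the relation $SJ_r(\bar{F}_t|\bar{G}_t)=\xi J_s(\bar{F},\bar{G},t)-J_s(X;t)$, insert the differential equation \eqref{inaccuracydiff} for the dynamic inaccuracy, and then eliminate $\xi J_s(\bar{F},\bar{G},t)$ using \eqref{eq:R3relation2} again, with the $\tfrac{1}{2}$ terms cancelling. The only difference is presentational: you derive the auxiliary identity $\frac{d}{dt}J_s(X;t)=2h_X(t)J_s(X;t)+\frac{1}{2}$ explicitly, whereas the paper invokes it silently inside its one-line computation.
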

\begin{proof}
 Differentiating $SJ_r(\bar{F_t}|\bar{G_t})$, we get
\[
\frac{d}{dt}SJ_r(\bar{F_t}|\bar{G_t})=(h_X(t)+h_Y(t))\xi J_s(\bar{F},\bar{G},t)+\frac{1}{2}-2J_s(X;t)h_X(t)-\frac{1}{2}
\]
Applying \ref{eq:R3relation2}, we get
\[
SJ'_r(\bar{F}_t|\bar{G}_t)=(h_X(t)+h_Y(t))SJ_r(\bar{F}_t|\bar{G}_t)+(h_Y(t)-h_X(t))J_s(X;t).
\]
\end{proof}
\begin{theorem} \label{Theorem:R3divmonotone}
    DSED is nondecreasing (nonincreasing)  if and only if
    \begin{equation}
        SJ_r(\bar{F}_t|\bar{G}_t)\geq(\leq)\frac{h_X(t)-h_Y(t)}{h_X(t)+h_Y(t)}J_s(X;t)
    \end{equation}
\end{theorem}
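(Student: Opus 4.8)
The plan is to read the monotonicity of the dynamic survival extropy divergence directly off the differential equation established in the previous theorem, namely \eqref{eq:R3diffeqn1}. By definition, $SJ_r(\bar{F}_t|\bar{G}_t)$ is nondecreasing (nonincreasing) in $t$ exactly when $SJ'_r(\bar{F}_t|\bar{G}_t)\geq 0$ ($\leq 0$) for every $t$, so the entire argument reduces to analyzing the sign of the right-hand side of \eqref{eq:R3diffeqn1}. This turns the problem into a routine rearrangement, and I would present it as a chain of equivalences so that both directions of the ``if and only if'' are obtained simultaneously.

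Concretely, I would substitute the expression for $SJ'_r(\bar{F}_t|\bar{G}_t)$ from \eqref{eq:R3diffeqn1} into the condition $SJ'_r(\bar{F}_t|\bar{G}_t)\geq 0$, giving
\[
(h_X(t)+h_Y(t))\,SJ_r(\bar{F}_t|\bar{G}_t)+(h_Y(t)-h_X(t))\,J_s(X;t)\geq 0.
\]
Since $h_X(t)$ and $h_Y(t)$ are hazard rates and hence non-negative, the factor $h_X(t)+h_Y(t)$ is positive on the support of the residual lifetimes, so I may divide both sides by it without reversing the inequality. Transposing the second term then yields
\[
SJ_r(\bar{F}_t|\bar{G}_t)\geq -\frac{h_Y(t)-h_X(t)}{h_X(t)+h_Y(t)}\,J_s(X;t)=\frac{h_X(t)-h_Y(t)}{h_X(t)+h_Y(t)}\,J_s(X;t),
\]
which is precisely the claimed inequality. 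The nonincreasing case is symmetric: starting from $SJ'_r(\bar{F}_t|\bar{G}_t)\leq 0$ and again dividing by the positive quantity $h_X(t)+h_Y(t)$ produces the reversed inequality. Because each step is reversible, the equivalence holds in both directions.

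The argument is essentially a one-line manipulation once \eqref{eq:R3diffeqn1} is available, so there is no substantive obstacle. The only point requiring care is the sign of the normalizing factor $h_X(t)+h_Y(t)$: I would state explicitly that hazard rates are non-negative, so dividing by $h_X(t)+h_Y(t)$ preserves the direction of the inequality, while tacitly assuming this sum is strictly positive wherever the residual lifetime distributions are supported (otherwise the quotient on the right-hand side is undefined). With that observation recorded, the equivalence is immediate.
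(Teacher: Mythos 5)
Your proposal is correct and follows exactly the route the paper intends: the paper states this theorem without proof as an immediate consequence of the differential equation \eqref{eq:R3diffeqn1}, and your rearrangement of $SJ'_r(\bar{F}_t|\bar{G}_t)\geq(\leq)\,0$ after dividing by the positive factor $h_X(t)+h_Y(t)$ is precisely that argument. Your explicit remark about the strict positivity of $h_X(t)+h_Y(t)$ on the support is a careful touch the paper leaves tacit.
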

The following theorem is a characterization of exponential distribution using dynamic survival extropy divergence.
\begin{theorem}
    Let $X$ follow an exponential distribution. Then $SJ_r(\bar{F_t}|\bar{G_t})$ is a constant if and only if $Y$ is exponential.
\end{theorem}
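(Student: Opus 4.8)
The plan is to mirror the proof of Theorem~\ref{R3chartheorem1}, this time driving the argument with the differential equation~\eqref{eq:R3diffeqn1} satisfied by the dynamic survival extropy divergence. As the statement is an equivalence, I would prove the two implications separately: the forward (characterization) direction via the differential equation, and the converse via the memoryless property of the exponential law.

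For the forward direction, suppose $X$ is exponential, so that $h_X(t)=\theta$ is constant. The first fact I would record is that the dynamic survival extropy of an exponential law is itself constant: inserting $\bar{F}(x)/\bar{F}(t)=e^{-\theta(x-t)}$ into the definition of $J_s(X;t)$ and setting $u=x-t$ gives $J_s(X;t)=-1/(4\theta)$, independent of $t$. Assuming $SJ_r(\bar{F}_t|\bar{G}_t)=c$ is constant, we have $SJ'_r(\bar{F}_t|\bar{G}_t)=0$, so \eqref{eq:R3diffeqn1} collapses to
\[
0=(\theta+h_Y(t))\,c+(h_Y(t)-\theta)\,J_s(X;t).
\]
Writing $k=J_s(X;t)=-1/(4\theta)$ and regrouping gives $h_Y(t)(c+k)=\theta(k-c)$, whence $h_Y(t)=\theta(k-c)/(c+k)$ is constant. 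A constant hazard rate forces $Y$ to be exponential.

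The converse is the easier direction. If $Y$ is also exponential, then by the memoryless property both residual lifetimes $X_t$ and $Y_t$ retain their original exponential distributions, so the substitution $u=x-t$ in the integral defining $SJ_r(\bar{F}_t|\bar{G}_t)$ removes all dependence on $t$, and the divergence is constant. One could equally quote the explicit constant value obtained for two exponential laws directly.

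The step I expect to require the most care is the division by $c+k$ when solving for $h_Y(t)$, since the companion proof of Theorem~\ref{R3chartheorem1} had no such denominator. I would verify $c+k\neq 0$ as follows: if $c+k=0$, the relation $h_Y(t)(c+k)=\theta(k-c)$ reduces to $\theta(k-c)=0$, forcing $k=c$ and hence $k=0$; but $k=-1/(4\theta)\neq 0$ for any finite $\theta>0$, a contradiction. Thus $c+k\neq 0$ always holds, the expression for $h_Y(t)$ is well defined and constant, and the characterization is complete.
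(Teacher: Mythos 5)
Your proposal is correct and follows essentially the same route as the paper: substitute the constant hazard $h_X(t)=\theta$ and the constant $J_s(X;t)=-1/(4\theta)$ into the differential equation (\ref{eq:R3diffeqn1}), solve for $h_Y(t)$ to conclude it is constant, and establish the converse by the explicit computation for two exponentials (the paper's Example \ref{EXample:R3dynamicexp}). Your check that $c+k\neq 0$ before dividing is a point of care the paper's proof silently omits (it divides by $\frac{1}{4\lambda}-c$ without comment), so your write-up is, if anything, slightly more complete.
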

\begin{proof}
Assume that $SJ_r(\bar{F_t}|\bar{G_t})=c$, a constant. If $X$ follows an exponential distribution with mean $1/\lambda$, then $J_s(X;t)=-1/4\lambda$ and $h_X(t)=\lambda$ are independent of $t$. Substituting in (\ref{eq:R3diffeqn1}), we have 
\begin{equation*}
    \begin{split}
        (\lambda+h_Y(t))c+(h_Y(t)-\lambda)(\frac{-1}{4\lambda})&=0\\
        c\lambda+\frac{\lambda}{4\lambda}=\left(\frac{1}{4\lambda}-c\right)h_Y(t),\\
    \end{split}
\end{equation*}
which implies the $h_Y(t)$ is a constant. Or we can prove it directly from (\ref{eq:R3relation2}). Differentiating (\ref{eq:R3relation2}), we have
    \[
    \frac{d}{dt}SJ_r(\bar{F}_t|\bar{G}_t)=\frac{d}{dt}\xi J_s(\bar{F}_t,\bar{G}_t,t)=0
    \]
    which implies $\bar{\xi}J(\bar{F}_t$, $\bar{G}_t,t)$ is a constant. By Theorem \ref{R3chartheorem1}, if $X$ is exponential, then $SJ_r(\bar{F}_t|\bar{G}_t)$ is a constant if and only if $Y$ is exponential. So, $Y$ is exponential. The converse part of the theorem is proved in the Example \ref{EXample:R3dynamicexp}.  
\end{proof}
\begin{example}\label{EXample:R3dynamicexp}
    Let $X$ and $Y$ follow exponential distributions with mean $1/\lambda_1$ and $1/\lambda_2$ respectively. Then $J_s(\bar{F})=-1/4\lambda_1$, $\xi J_s(\bar{F}_t,\bar{G}_t,t)=-1/2(\lambda_1+\lambda_2)$ and then 
    \[
    SJ_r(\bar{F}_t|\bar{G}_t)=\frac{1}{4\lambda_1}-\frac{1}{2(\lambda_1+\lambda_2)}.
    \]
\end{example}
Now, we discuss some results for stochastic ordering.
 The random variable $ X $ is said to be smaller (larger) than or equal to $ Y $ in the 
\begin{enumerate}[(a)]
	\item  Hazard rate ordering, denoted by $ X \leq _{hr}(\geq _{hr})  Y $, if $ h_F(x) \geq (\leq) h_G (x) $ for all $ x \geq 0 $,
	\item Survival extropy ordering, denoted by $ X \leq _{sx} (\geq _{sx}) Y $, if $ J_s(X) \leq (\geq) J_s(Y) $.
\end{enumerate}
Since $ SJ_r(\bar{F}_t|\bar{G}_t)$ lies between $-\infty$ and $\infty$, we derive the conditions of $ SJ_r(\bar{F}_t|\bar{G}_t)$ to be strictly positive and strictly negative in the Theorem \ref{Theorem:R3divalwayspositive} and \ref{Theorem:R3divalwaysnegative} respectively.
\begin{theorem}\label{Theorem:R3divalwayspositive}
    Let $ SJ_r(\bar{F}_t|\bar{G}_t)$ is an increasing function of $t$. If $h_X(t)<h_Y(t)$, then $SJ_r(\bar{F}_t|\bar{G}_t)$ is always positive.
\end{theorem}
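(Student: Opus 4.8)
The plan is to derive the result directly from the monotonicity characterization already established in Theorem \ref{Theorem:R3divmonotone}, followed by an elementary sign analysis of the resulting lower bound. The key observation is that the hypotheses feed precisely into that characterization, and the remaining work is to check the signs of the factors appearing in it.

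First I would invoke Theorem \ref{Theorem:R3divmonotone}. Since $SJ_r(\bar{F}_t|\bar{G}_t)$ is assumed to be an increasing (hence nondecreasing) function of $t$, the ``$\geq$'' branch of that characterization applies, giving
\[
SJ_r(\bar{F}_t|\bar{G}_t)\geq\frac{h_X(t)-h_Y(t)}{h_X(t)+h_Y(t)}J_s(X;t).
\]
Second, I would examine the sign of the right-hand side factor by factor. The hypothesis $h_X(t)<h_Y(t)$, together with the positivity of hazard rates, gives $h_X(t)-h_Y(t)<0$ and $h_X(t)+h_Y(t)>0$, so the ratio $\frac{h_X(t)-h_Y(t)}{h_X(t)+h_Y(t)}$ is strictly negative. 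The dynamic survival extropy $J_s(X;t)=-\frac{1}{2}\int_t^\infty\left(\frac{\bar{F}(x)}{\bar{F}(t)}\right)^2dx$ is, by its definition, strictly negative. Hence the product of the negative ratio and the negative extropy is strictly positive.

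Combining the two steps, the lower bound on the right is itself strictly positive, so $SJ_r(\bar{F}_t|\bar{G}_t)$ is bounded below by a positive quantity and is therefore always positive, as claimed.

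The proof is short and the main obstacle is only a matter of bookkeeping: one must be careful to note that $J_s(X;t)$ is \emph{strictly} negative (not merely non-positive), since it is this strictness, combined with the strict inequality $h_X(t)<h_Y(t)$, that upgrades the conclusion from non-negativity to genuine positivity. No separate estimate or construction is needed, as all the analytic content is already packaged into Theorem \ref{Theorem:R3divmonotone}.
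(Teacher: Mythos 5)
Your proposal is correct and follows essentially the same route as the paper: both invoke the lower bound from Theorem \ref{Theorem:R3divmonotone} and then observe that the factor $h_X(t)-h_Y(t)<0$ times the strictly negative $J_s(X;t)$ makes the bound positive. Your version is in fact slightly more careful than the paper's, since you explicitly note the strict negativity of $J_s(X;t)$ needed to upgrade the nondecreasing ``$\geq$'' branch to strict positivity of the conclusion.
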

\begin{proof}
    See Theorem \ref{Theorem:R3divmonotone}. Let $ SJ_r(\bar{F}_t|\bar{G}_t)$ is strictly increasing and $h_Y(t)>h_X(t)$, then 
    \[
    (h_X(t)-h_Y(t))J_s(X;t)>0.
    \]
    So, we have 
    \[
    SJ_r(\bar{F}_t|\bar{G}_t)>\frac{h_X(t)-h_Y(t)}{h_X(t)+h_Y(t)}J_s(X;t)>0.
    \]
\end{proof}
\begin{corollary}
    Consider an additive hazard model $h_Y(t)=h_X(t)+\theta, \theta \geq 0$. If $SJ_r(\bar{F}_t|\bar{G}_t)$ is an increasing function of $t$, then $SJ_r(\bar{F}_t|\bar{G}_t)$ is always non-negative.
\end{corollary}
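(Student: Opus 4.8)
The plan is to recognize this corollary as a direct specialization of Theorem \ref{Theorem:R3divalwayspositive}. The additive hazard assumption $h_Y(t)=h_X(t)+\theta$ with $\theta\geq 0$ immediately yields $h_Y(t)\geq h_X(t)$ for every $t\geq 0$, which is exactly the hazard-rate comparison required by that theorem, except at the boundary value $\theta=0$ where the inequality is no longer strict. First I would therefore split the argument according to the sign of $\theta$, since the strict-positivity theorem needs the strict inequality $h_X(t)<h_Y(t)$.

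When $\theta>0$, the relation $h_Y(t)=h_X(t)+\theta$ gives $h_X(t)<h_Y(t)$ for all $t$. Because $SJ_r(\bar{F}_t|\bar{G}_t)$ is assumed increasing in $t$, Theorem \ref{Theorem:R3divalwayspositive} applies verbatim and delivers $SJ_r(\bar{F}_t|\bar{G}_t)>0$, which is in particular non-negative. Concretely this uses $J_s(X;t)<0$ together with $h_X(t)-h_Y(t)=-\theta<0$ to force the lower bound $\frac{h_X(t)-h_Y(t)}{h_X(t)+h_Y(t)}J_s(X;t)$ furnished by Theorem \ref{Theorem:R3divmonotone} to be strictly positive, whence $SJ_r(\bar{F}_t|\bar{G}_t)$ exceeds it.

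The remaining case $\theta=0$ I would handle separately. Here $h_Y(t)=h_X(t)$ for all $t$, and recovering the survival functions through $\bar{F}(t)=\exp(-\int_0^t h_X(s)\,ds)$ and the analogous expression for $\bar{G}$ shows $\bar{F}\equiv\bar{G}$. Substituting $\bar{F}=\bar{G}$ into the definition of $SJ_r(\bar{F}_t|\bar{G}_t)$ makes the integrand vanish, so the dynamic divergence is identically zero and hence non-negative. Combining the two cases gives $SJ_r(\bar{F}_t|\bar{G}_t)\geq 0$ throughout, as claimed.

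The argument is essentially bookkeeping, so the only genuine subtlety, and the step I would be most careful about, is the boundary case $\theta=0$, where one cannot invoke Theorem \ref{Theorem:R3divalwayspositive} and must instead argue that equal hazard rates force identical distributions and thus a vanishing divergence. I would also confirm that $h_X(t)+h_Y(t)>0$, so that the lower bound from Theorem \ref{Theorem:R3divmonotone} is well defined with the correct sign; this holds for any genuine lifetime distribution whose hazard rate is not identically zero.
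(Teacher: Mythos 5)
Your proposal is correct and follows exactly the route the paper intends: the corollary is stated as an immediate specialization of Theorem \ref{Theorem:R3divalwayspositive} (the paper omits the proof as trivial), and you apply that theorem verbatim when $\theta>0$. Your separate treatment of the boundary case $\theta=0$ --- where equal hazard rates force $\bar{F}\equiv\bar{G}$ and hence $SJ_r(\bar{F}_t|\bar{G}_t)\equiv 0$, which is precisely why the corollary's conclusion is weakened from ``positive'' to ``non-negative'' --- is a careful detail the paper leaves implicit, and it completes the argument correctly.
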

\begin{theorem}\label{Theorem:R3divalwaysnegative}
    Let $ SJ_r(\bar{F}_t|\bar{G}_t)$ is a decreasing function of $t$. If $h_X(t)>h_Y(t)$, then $SJ_r(\bar{F}_t|\bar{G}_t)$ is always negative. We have
    \[
    SJ_r(\bar{F}_t|\bar{G}_t)<\frac{h_X(t)-h_Y(t)}{h_X(t)+h_Y(t)}J_s(X;t)<0.
    \]
\end{theorem}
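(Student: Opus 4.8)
The plan is to mirror the argument of Theorem \ref{Theorem:R3divalwayspositive}, relying entirely on the monotonicity characterization in Theorem \ref{Theorem:R3divmonotone}. Since $SJ_r(\bar{F}_t|\bar{G}_t)$ is assumed to be a decreasing (nonincreasing) function of $t$, the ``$\leq$'' branch of that characterization immediately yields
\[
SJ_r(\bar{F}_t|\bar{G}_t)\leq \frac{h_X(t)-h_Y(t)}{h_X(t)+h_Y(t)}J_s(X;t).
\]
This gives the first of the two inequalities essentially for free, so no fresh computation with the defining integral of $SJ_r$ is needed at this step.

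The remaining task is to show that the right-hand side is strictly negative, which I would do by pinning down the sign of each factor separately. First, because hazard rates are nonnegative and $h_X(t)>h_Y(t)$ by hypothesis, the numerator $h_X(t)-h_Y(t)$ is strictly positive while the denominator $h_X(t)+h_Y(t)$ is positive, so the ratio $\frac{h_X(t)-h_Y(t)}{h_X(t)+h_Y(t)}$ is strictly positive. Second, the dynamic survival extropy $J_s(X;t)=-\frac{1}{2}\int_t^\infty\left(\frac{\bar{F}(x)}{\bar{F}(t)}\right)^2 dx$ is, by its very definition, negative. Multiplying a strictly positive ratio by the strictly negative $J_s(X;t)$ makes the product strictly negative, which furnishes the second inequality $\frac{h_X(t)-h_Y(t)}{h_X(t)+h_Y(t)}J_s(X;t)<0$.

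Chaining the two bounds then delivers the claimed
\[
SJ_r(\bar{F}_t|\bar{G}_t)<\frac{h_X(t)-h_Y(t)}{h_X(t)+h_Y(t)}J_s(X;t)<0,
\]
and in particular $SJ_r(\bar{F}_t|\bar{G}_t)$ is always negative. The only delicate point is reconciling the strictness of the inequalities with the ``$\leq$'' stated in Theorem \ref{Theorem:R3divmonotone}: to obtain the strict first inequality I would either invoke strict decrease of $SJ_r$ together with a strict version of that characterization, or simply observe that the strict negativity of the right-hand side already forces $SJ_r(\bar{F}_t|\bar{G}_t)<0$ on its own, which is all the conclusion actually requires. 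I expect this bookkeeping over strict versus nonstrict inequalities to be the main (and essentially only) obstacle, since the sign analysis itself is immediate.
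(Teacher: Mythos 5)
Your proposal is correct and matches the paper's approach exactly: the paper gives no separate proof for this theorem, treating it as the mirror image of Theorem \ref{Theorem:R3divalwayspositive}, whose proof likewise invokes the monotonicity characterization of Theorem \ref{Theorem:R3divmonotone} and then determines the sign of $\frac{h_X(t)-h_Y(t)}{h_X(t)+h_Y(t)}J_s(X;t)$ from $h_X(t)>h_Y(t)$ and $J_s(X;t)<0$. Your closing remark about strict versus nonstrict inequalities is in fact slightly more careful than the paper, which silently uses strict inequalities in the analogous proof.
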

Now, we derive the relationship between $ SJ_r(\bar{F}_t|\bar{G}_t)$ and $ SJ_r(\bar{G}_t|\bar{F}_t)$ in hazard rate ordering.
\begin{theorem}
    If $X\geq^{hr}Y$, then $SJ_r(\bar{F_t}|\bar{G_t})\geq SJ_r(\bar{G_t}|\bar{F_t})$.
\end{theorem}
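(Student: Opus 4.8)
The plan is to reduce this asymmetric comparison to a single sign-definite integral. First I would introduce the shorthand $u(x) = \bar{F}(x)/\bar{F}(t)$ and $v(x) = \bar{G}(x)/\bar{G}(t)$, which are exactly the survival functions of the residual lifetimes $X_t$ and $Y_t$. In this notation the two dynamic divergences read
\[
SJ_r(\bar{F}_t|\bar{G}_t) = \frac{1}{2}\int_t^\infty (u(x)-v(x))u(x)\,dx, \qquad SJ_r(\bar{G}_t|\bar{F}_t) = \frac{1}{2}\int_t^\infty (v(x)-u(x))v(x)\,dx.
\]

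Next I would subtract the two expressions. The mixed products cancel and the difference collapses to a difference of squares:
\[
SJ_r(\bar{F}_t|\bar{G}_t) - SJ_r(\bar{G}_t|\bar{F}_t) = \frac{1}{2}\int_t^\infty (u(x)-v(x))(u(x)+v(x))\,dx = \frac{1}{2}\int_t^\infty \left(u^2(x)-v^2(x)\right)dx.
\]
Equivalently, recognizing each squared term as a dynamic survival extropy, the right-hand side equals $J_s(Y;t)-J_s(X;t)$, so the claim is nothing more than a comparison of the two residual survival extropies.

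Then I would invoke the standard characterization of hazard rate ordering. Since $X \geq_{hr} Y$ means $h_X(x)\leq h_Y(x)$ for all $x$, we have $\frac{d}{dx}\log\!\left(\bar{F}(x)/\bar{G}(x)\right) = h_Y(x)-h_X(x)\geq 0$, so the ratio $\bar{F}/\bar{G}$ is nondecreasing. Comparing this monotone ratio at $x\geq t$ with its value at $t$ yields $\bar{F}(x)/\bar{F}(t)\geq \bar{G}(x)/\bar{G}(t)$, that is $u(x)\geq v(x)\geq 0$ on $[t,\infty)$. Hence $u^2(x)\geq v^2(x)$ pointwise, the integrand above is nonnegative, and therefore $SJ_r(\bar{F}_t|\bar{G}_t)\geq SJ_r(\bar{G}_t|\bar{F}_t)$, as required.

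The computation itself is short; the only point requiring real care is the bookkeeping of the direction of the hazard rate order as fixed by the definition in the text, where $X\geq_{hr} Y$ corresponds to the \emph{smaller} hazard rate $h_X\leq h_Y$, so that the induced pointwise ordering $u\geq v$ of the residual survival functions points in the direction that makes the difference of squares nonnegative. Everything else is the elementary cancellation of the cross terms and the sign of a difference of squares.
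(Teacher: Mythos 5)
Your proof is correct and takes essentially the same route as the paper: both arguments reduce the claim to the comparison $J_s(Y;t)\geq J_s(X;t)$ --- the paper via the identity $SJ_r(\bar{F}_t|\bar{G}_t)=\xi J_s(\bar{F},\bar{G},t)-J_s(X;t)$ with the symmetric inaccuracy term cancelling, and you by the equivalent direct cancellation of the cross terms in the integrals. The one point where you go beyond the paper is that the paper simply asserts that $X\geq^{hr}Y$ implies $J_s(X;t)\leq J_s(Y;t)$, whereas you justify this step by noting that the hazard rate order makes $\bar{F}(x)/\bar{G}(x)$ nondecreasing, hence $\bar{F}(x)/\bar{F}(t)\geq\bar{G}(x)/\bar{G}(t)$ on $[t,\infty)$ --- a worthwhile filling of a gap, but the same underlying argument.
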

\begin{proof}
    If $X\geq^{hr}Y$, then $J_s(X;t)\leq J_s(Y;t)$. Since $SJ_r(\bar{F}_t|\bar{G}_t)=\xi J(\bar{F},\bar{G},t)-J_s(X;t)$ and $SJ_r(\bar{G}_t|\bar{F}_t)=\xi J(\bar{F},\bar{G},t)-J_s(Y;t)$, we have if $X\geq^{hr}Y$, then
    \[
    SJ_r(\bar{F}_t|\bar{G}_t)\geq SJ_r(\bar{G}_t|\bar{F}_t).
    \]
\end{proof}
For an additive dynamic survival extropy model and survival extropy ordering, we have the following results. The proofs are excluded since they are trivial from (\ref{eq:R3relation2}).
\begin{theorem}
    $J_s(Y;t)\geq(\leq)J_s(X;t)$ for every $t\geq 0 \iff SJ_r(\bar{F}_t|\bar{G}_t)\geq(\leq)J(\bar{G}_t|\bar{F}_t)$.
\end{theorem}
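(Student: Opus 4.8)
The plan is to read the result straight off the additive decomposition (\ref{eq:R3relation2}). First I would record that relation in both orientations: (\ref{eq:R3relation2}) itself gives $SJ_r(\bar{F}_t|\bar{G}_t)=\xi J_s(\bar{F},\bar{G},t)-J_s(X;t)$, and interchanging the roles of $X$ and $Y$ throughout (equivalently, swapping $\bar{F}$ and $\bar{G}$) yields $SJ_r(\bar{G}_t|\bar{F}_t)=\xi J_s(\bar{G},\bar{F},t)-J_s(Y;t)$, where the term $J(\bar{G}_t|\bar{F}_t)$ in the statement is read as $SJ_r(\bar{G}_t|\bar{F}_t)$.

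The single observation that makes everything collapse is that the dynamic survival extropy inaccuracy is symmetric in its two survival-function arguments. Indeed, the integrand in (\ref{Eq:R3dynamicinaccuracy}), namely $\frac{\bar{F}(x)}{\bar{F}(t)}\frac{\bar{G}(x)}{\bar{G}(t)}$, is unchanged under the exchange $\bar{F}\leftrightarrow\bar{G}$, so $\xi J_s(\bar{F},\bar{G},t)=\xi J_s(\bar{G},\bar{F},t)$ for every $t\geq 0$. Substituting this identity into the two displays above and subtracting, the common inaccuracy term cancels and I obtain the clean difference formula
\[
SJ_r(\bar{F}_t|\bar{G}_t)-SJ_r(\bar{G}_t|\bar{F}_t)=J_s(Y;t)-J_s(X;t).
\]

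With this identity in hand the equivalence is immediate: for each fixed $t$, the inequality $SJ_r(\bar{F}_t|\bar{G}_t)\geq SJ_r(\bar{G}_t|\bar{F}_t)$ holds precisely when the right-hand side is nonnegative, that is, when $J_s(Y;t)\geq J_s(X;t)$, and the same argument with every inequality reversed handles the $\leq$ case. Quantifying over all $t\geq 0$ then delivers both directions of the stated $\iff$ at once. There is essentially no obstacle here; the only point requiring any care is noticing the symmetry of $\xi J_s(\cdot,\cdot,t)$ in its arguments, after which the reasoning is purely algebraic, which is exactly why the authors describe the proof as trivial and omit it.
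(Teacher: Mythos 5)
Your argument is correct and is precisely the one the paper has in mind: the authors omit the proof as ``trivial from (\ref{eq:R3relation2})'', and indeed they use the very same two decompositions $SJ_r(\bar{F}_t|\bar{G}_t)=\xi J_s(\bar{F},\bar{G},t)-J_s(X;t)$ and $SJ_r(\bar{G}_t|\bar{F}_t)=\xi J_s(\bar{F},\bar{G},t)-J_s(Y;t)$ (with the symmetry of $\xi J_s$ in its arguments implicit) in their proof of the preceding hazard-rate-ordering theorem. Your only addition is to make that symmetry and the resulting cancellation explicit, which matches the intended reasoning exactly.
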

\begin{theorem}
    $J_s(Y;t)=J_s(X;t)+c$, for any $c>0$ $\iff$ $SJ_r(\bar{F_t}|\bar{G_t})=SJ_r(\bar{G_t}|\bar{F_t})+c$
\end{theorem}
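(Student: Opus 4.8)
The plan is to exploit the representation in equation (\ref{eq:R3relation2}) together with the symmetry of the dynamic survival extropy inaccuracy measure. First I would record the two companion identities obtained by specializing (\ref{eq:R3relation2}) to each ordering of the pair:
\[
SJ_r(\bar{F}_t|\bar{G}_t)=\xi J_s(\bar{F},\bar{G},t)-J_s(X;t),
\qquad
SJ_r(\bar{G}_t|\bar{F}_t)=\xi J_s(\bar{F},\bar{G},t)-J_s(Y;t).
\]
The crucial observation is that the inaccuracy term $\xi J_s(\bar{F},\bar{G},t)$, being defined in (\ref{Eq:R3dynamicinaccuracy}) as a symmetric integral in $\bar{F}$ and $\bar{G}$, is unchanged under interchanging the two survival functions, and so it is common to both identities.

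Subtracting the two identities eliminates this common term and yields the single key relation
\[
SJ_r(\bar{F}_t|\bar{G}_t)-SJ_r(\bar{G}_t|\bar{F}_t)=J_s(Y;t)-J_s(X;t).
\]
From here both implications are immediate. For the forward direction, substituting $J_s(Y;t)=J_s(X;t)+c$ into the right-hand side gives $SJ_r(\bar{F}_t|\bar{G}_t)-SJ_r(\bar{G}_t|\bar{F}_t)=c$, which is exactly $SJ_r(\bar{F_t}|\bar{G_t})=SJ_r(\bar{G_t}|\bar{F_t})+c$. For the converse, assuming $SJ_r(\bar{F_t}|\bar{G_t})=SJ_r(\bar{G_t}|\bar{F_t})+c$ forces the left-hand side of the key relation to equal $c$, whence $J_s(Y;t)-J_s(X;t)=c$, i.e. $J_s(Y;t)=J_s(X;t)+c$.

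Since the argument is a direct algebraic manipulation of (\ref{eq:R3relation2}), there is essentially no obstacle here; this is precisely why the paper flags such results as trivial. The only point requiring a moment's care is verifying the symmetry of $\xi J_s(\bar{F},\bar{G},t)$ so that it genuinely cancels upon subtraction, and noting that the equivalence holds verbatim for every constant $c$, the hypothesis $c>0$ playing no role in the algebra.
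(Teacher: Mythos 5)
Your proof is correct and is precisely the argument the paper has in mind: it omits the proof as ``trivial from (\ref{eq:R3relation2})'', and you have simply spelled out that triviality by applying (\ref{eq:R3relation2}) to both orderings, using the symmetry of $\xi J_s(\bar{F},\bar{G},t)$ to cancel the common inaccuracy term, and reading off both implications from $SJ_r(\bar{F}_t|\bar{G}_t)-SJ_r(\bar{G}_t|\bar{F}_t)=J_s(Y;t)-J_s(X;t)$. Your remark that the hypothesis $c>0$ is never used is also accurate.
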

We have defined a symmetric divergence measure based on survival extropy divergence in section 3. We can extend the definition to dynamic divergence measures. The Symmetric dynamic survival extropy divergence is defined as follows.
\begin{definition}
  Let $X$ and $Y$ be two non-negative continuous random variables with survival functions $\bar{F}$ and $\bar{G}$ respectively. Let $SJ_r(\bar{F}_t|\bar{G}_t)$ and $SJ_r(\bar{G}_t|\bar{F}_t)$ be the corresponding dynamic survival extropy divergence measures. Then the symmetric dynamic survival extropy divergence measure between the distributions is defined as
    \begin{equation}\label{Equation:R3symdynamicdiv}
SSJ_r(\bar{F}_t,\bar{G}_t)=\frac{1}{2}\left(SJ_r\left(\bar{F}|\frac{\bar{F}+\bar{G}}{2}\right)+SJ_r\left(\bar{G}|\frac{\bar{F}+\bar{G}}{2}\right)\right).
    \end{equation}     
\end{definition}
Since we have 
\begin{equation}\label{Equation:R3dynasurextdivpart1}
SJ_r\left(\bar{F}_t|\frac{\bar{F}_t+\bar{G}_t}{2}\right)=\frac{1}{4}SJ_r(\bar{F}_t|\bar{G}_t)
\end{equation}
and
\begin{equation}\label{Equation:R3dynasurextdivpart2}
SJ_r\left(\bar{G}_t|\frac{\bar{F}_t+\bar{G}_t}{2}\right)=\frac{1}{4}SJ_r(\bar{G}_t|\bar{F}_t).
\end{equation}
Substituting (\ref{Equation:R3dynasurextdivpart1}) and (\ref{Equation:R3dynasurextdivpart2}) in (\ref{Equation:R3symdynamicdiv}), we rewrite the symmetric survival extropy divergence, $SSJ(\bar{F},\bar{G})$ as
\begin{equation}
 SSJ(\bar{F}_t,\bar{G}_t)=\frac{1}{8}\left( SJ_r(\bar{F}_t|\bar{G}_t)+SJ_r(\bar{G}_t|\bar{F}_t)\right).   
\end{equation}

\section{Estimation and simulation studies of the cumulative extropy divergence}
In this section, we develop a nonparametric estimator for the cumulative extropy divergence. Let $X_{1}$, $X_{2}$, ...$X_{n}$ be t the iid random variables with a common distribution function $F(x)$, and let $Y_{1}$, $Y_{2}$, ...$Y_{n}$ be the iid random variables with a common distribution function $G(y)$. Let $\hat{F}_{n}(x_{(i)})$ and $\hat{G}_{n}(y_{(i)})$ be the empirical distribution functions of $X$ and $Y$, respectively. Then the estimator of $SJ(\bar{F},\bar{G})$ is given by 
\begin{equation}\label{5.1}
    S\hat{J}(\bar{F}|\bar{G})=\frac{1}{2}\sum_{i=1}^{n-1}(\bar{F}_{n}(x_{(i)})-\bar{G}_n(x_{(i)}))\bar{F}_n(x_{(i)})(x_{(i+1)}-x_{(i)})
\end{equation}
Corresponding dynamic cumulative extropy for residual life random variables is
\begin{equation}\label{5.2}
    S\hat{J}(\bar{F_t}|\bar{G_t})=\frac{1}{2}\sum_{i=1}^{n-1}\left(\frac{\bar{F}_{n}(x_{(i)})}{\bar{F}_{n}(t)}-\frac{\bar{G}_n(x_{(i)})}{\bar{G}_{n}(t)}\right)\frac{\bar{F}_{n}(x_{(i)})}{\bar{F}_{n}(t)}(x_{(i+1)}-x_{(i)})
\end{equation}

\begin{table}[h!]
\begin{center}
\caption{Bias and MSE of $SJ_r(\bar{F}|\bar{G})=-0.016667$ using $ F $  and $ G $ are exponential distribution functions with parameters $\lambda_1=3$ and $\lambda_2 =2$ respectively.}
\label{Table:R3estimationofdiv} 
 \begin{tabular}{p{2cm} p{2cm} p{2cm} p{3cm}}
 \hline
 $n$ & $S\hat{J}(\bar{F}|\bar{G})$ & $Bias$ & $MSE$\\ 
 \hline
  30  & -0.019019 & 0.002352 & 0.0001273 \\ 
  50 & -0.017993 & 0.00132 & $7.599381\times10^{-5}$\\
  75 & -0.017665 & 0.000998  & $4.844051\times10^{-5}$\\
 100 & -0.017414 & 0.000747 & $3.63911\times 10^{-5}$\\
 200 & -0.017029 & 0.0003624 & $1.791541\times 10^{-5}$\\
 \hline
\end{tabular}
\end{center}
\end{table}
Table \ref{Table:R3estimationofdiv} shows the bias and MSE of survival extropy divergence estimate for two exponential distributions. It is clear that as $n$ increases, both bias and MSE decrease. 
\begin{table}[h!]
\begin{center}
\caption{Bias and MSE of $SJ_r(\bar{F}_t|\bar{G}_t)=0.125$ using $ F $  and $ G $ are exponential distribution functions with parameters $\lambda_1=1$ and $\lambda_2 =3$ respectively and $t=0.5$.}
\label{Table:R3estimationofdynamicdiv} 
 \begin{tabular}{p{2cm} p{2cm} p{2cm} p{2cm}}
 \hline
 $n$ & $S\hat{J}_r(\bar{F}_t|\bar{G}_t)$ & $Bias$ & $MSE$ \\ 
 \hline
  50  & 0.100400 & 0.024649 & 0.004970 \\ 
  75 & 0.115652 & 0.009348 & 0.002018\\
  100 & 0.122000 & 0.003512  & 0.001507\\
  200 & 0.123097 & 0.001903 & 0.0007652\\
 500 & 0.124602 & 0.0003984 & 0.0003101\\
 \hline
\end{tabular}
\end{center}
\end{table}
Table \ref{Table:R3estimationofdynamicdiv} shows the estimator values of the dynamic survival extropy divergence, its bias, and MSE for different values of $n$ for $t=0.5$. The estimator performs well since the bias and MSE are low and decrease when the sample size increases.\\
\begin{table}[h!]
\begin{center}
\caption{Bias and MSE of $SJ_r(\bar{F}_t|\bar{G}_t)=-0.0083672$ using $ F $   and $ G $ are Gombertz distribution functions with parameters $a=5$ and $b=3$ respectively.}
\label{Table:R3estimationofdynamicdivgombertz} 
 \begin{tabular}{p{1cm} p{1cm} p{2.2cm} p{2.2cm} p{3.5cm}}
 \hline
 $n$ & $t$ & $S\hat{J}_r(\bar{F}|\bar{G})$ & $Bias$ & $MSE$\\ 
 \hline
  50  & 0.1 & -0.01056341 & 0.001045062 & $2.750959\times10^{-5}$ \\ 
  75 & &-0.0101865 & 0.0006682& $1.718583\times 10^{-5}$\\
  100 & &-0.0100108 & 0.000492454  & $1.248892\times10^{-5}$\\
  200& &-0.0097583 & 0.0002399 & $5.899297\times 10^{-6}$\\
  500& &-0.0096278 & 0.000109472 & $2.203666\times 10^{-6}$\\
 \hline
  50  & 0.25 & -0.01106831 & 0.002701143 & $7.915334\times10^{-5}$ \\ 
  75 & &-0.010044 & 0.001677& $3.300782\times 10^{-5}$\\
  100 & &-0.009664 & 0.001297  & $2.165322\times10^{-5}$\\
  200& &-0.0089661 & 0.0005989 & $8.861023\times 10^{-6}$\\
  500& &-0.0086279 & 0.0002608068 & $3.329426\times 10^{-6}$\\
  \hline
   50  & 0.3 & -0.011738 & 0.003727273 & 0.001479221 \\ 
  75 & &-0.0098315 & 0.001820453 & $5.906803\times 10^{-5}$\\
  100 & &-0.0092817 & 0.001270715  & $3.151929\times10^{-5}$\\
  200& &-0.0086947 & 0.000683663 & $1.229168\times 10^{-5}$\\
  500& &-0.00829902 & 0.00028796 & $4.269108\times 10^{-6}$\\
  \hline
\end{tabular}
\end{center}
\end{table}

We also considered two Gompertz distributions for further validation and computed their discrepancy based on the dynamic survival extropy divergence. Table \ref{Table:R3estimationofdynamicdivgombertz} shows the bias and MSE of dynamic survival extropy divergence estimate for different values of $t$ and $n$. As $n$ increases, the bias and MSE decrease asymptotically validate the performance of the estimator.
\section{Application in real data}
The dataset \cite{Blackblaze} includes hard disk properties in the system such as total disk capacity, usage details, failure figures, and daily drive status information. Our subject of interest is the calculated lifetime of each unique hard disk and the lifetimes of the collected hard disk is given in number of days. Using the symmetric survival extropy divergence and symmetric dynamic survival extropy divergence, we compare the lifetime distributions of hard disks with varying capacities. For a comparative study, we first categorize the failure times of hard disks into four groups, say A, B, C, and D based on their capacity. Group A includes hard disks with capacities of 80 GB, 160 GB, 250 GB, 137 GB, 320 GB, and 500 GB. Group B comprises disks with capacities of 1 TB, 1.5 TB, 2 TB, and 3 TB. Group C consists of disks with capacities of 4 TB, 5 TB, and 6 TB. Finally, Group D contains hard disks with capacities of 8 TB, 10 TB, and 12 TB.
\begin{table}[h!]
\begin{center}
\caption{Symmetric Survival extropy divergence for different combinations of groups}
\label{Table:R3harddiskdataanalysis} 
 \begin{tabular}{p{2cm}|p{2cm}|p{2cm}|p{2cm}|p{2cm}}
 \hline
 &A & B & C&D\\ 
 \hline
A&0&0.2895&-88.367&-0.2531\\
B&0.2895&0&-100.96&-0.4733\\
C&-88.367&-100.96&0&-56.4956\\
D&-0.2531&-0.4733&-56.4956&0\\
 \hline
\end{tabular}
\end{center}
\end{table}

We estimate the symmetric cumulative extropy divergence using \eqref{5.1} between the lifetime distributions of all possible combinations of groups A, B, C and D and the estimated values are given in Table \ref{Table:R3harddiskdataanalysis}.  From Table \ref{Table:R3harddiskdataanalysis}, it is obvious that the divergence between the lifetime distributions of all possible combinations of groups A, B and D are considerably lower than that of group C. For a non-negative information metric, we compute the symmetric dynamic survival extropy divergence \eqref{5.2} between hard disks of various capacities for different values of $t$. 
To study the failure time behaviour of different capacities of hard disks, we have then computed the dynamic survival extropy divergence for $t=50$ and $t=100$ days (see Tables \ref{Table:R3harddiskdynamict=50} and \ref{Table:R3harddiskdynamict=100}). Similar to Table \ref{Table:R3harddiskdataanalysis}, for $t=50$ and $t=100$, the divergence between group C and all other groups are relatively higher than all possible combinations of groups A, B, and D. 
\begin{table}[h!]
\begin{center}
\caption{Dynamic survival extropy divergence for different combinations of groups}
\label{Table:R3harddiskdynamict=50} 
\begin{tabular}{p{2cm}|p{2cm}|p{2cm}|p{2cm}|p{2cm}}
 \hline
$t=50$ &A & B & C&D\\ 
 \hline
A&0&0.4384&3.6289&0.3053\\
B&0.4384&0&2.92905&0.04975\\
C&3.6289&2.92905&0&3.224\\
D&0.3053&0.04975&3.224&0\\
 \hline
\end{tabular}
\end{center}
\end{table}

\begin{table}[h!]
\begin{center}
\caption{Dynamic survival extropy divergence for different combinations of groups}
\label{Table:R3harddiskdynamict=100} 
\begin{tabular}{p{2cm}|p{2cm}|p{2cm}|p{2cm}|p{2cm}}
 \hline
$t=100$ &A & B & C&D\\ 
 \hline
A&0&0.4903&2.6251&0.4123\\
B&0.4903&0&2.7366&0.0503\\
C&2.6251&2.7366&0&2.8945\\
D&0.4123&0.0503&2.8945&0\\
 \hline
\end{tabular}
\end{center}
\end{table}
\begin{table}[h!]
\begin{center}
\caption{Dynamic survival extropy divergence for different combinations of groups}
\label{Table:R3harddiskdynamict=250} 
\begin{tabular}{p{2cm}|p{2cm}|p{2cm}|p{2cm}|p{2cm}}
 \hline
$t=150$ &A & B & C&D\\ 
 \hline
A&0&0.9664&1.0313&0.8289\\
B&0.9664&0&2.1344&0.1636\\
C&1.0313&2.1344&0&2.4403\\
D&0.8289&0.1636&2.4403&0\\
 \hline
\end{tabular}
\end{center}
\end{table}
We have also estimated the dynamic extropy divergence for $t=150$ days, given in Table \ref{Table:R3harddiskdynamict=250}. 
Comparing the tables, it is evident that the cumulative extropy divergence between the failure time distribution of Group C have higher divergence values compared to all other groups. Using Tables \ref{Table:R3harddiskdynamict=50} -\ref{Table:R3harddiskdynamict=250}, one can easily infer that the failure times pattern of Groups A, B, and D are relatively similar, which are different from Group C, or the failure times pattern of hard disk capacities under Group C stands different from others.
\section{Concluding Remarks}
The extropy-based divergence measures are relatively new in information theory and we have introduced a cumulative inaccuracy and divergence measures based on survival extropy and studied some of their applications. We have defined a survival extropy inaccuracy ratio using the proposed inaccuracy measures, and examined its application in image analysis. We have estimated the survival extropy inaccuracy ratio between images and classified the images based on the magnitude of the estimates. We have also introduced the survival extropy divergence measure and studied its dynamic behavior. Furthermore, we have developed a test for assessing goodness of fit for uniform $U(0,b)$ distributions using the survival extropy divergence measure. We have validated and compared the performance of the test with some important tests available in the literature. Characterizations for exponential distribution using both measures were derived. Also, we have estimated the divergence measure using a non-parametric estimation method and validated its performance using simulation studies.  Finally, the article has demonstrated the significance of the estimators through real data sets relevant in survival studies.

\section*{Acknowledgements}

The first and second authors thank Cochin University of Science and Technology, India, for their financial support.

\section*{Conflict of Interest statement}
On behalf of all the authors, the corresponding author states that there is no conflict of interest.
\section*{Data availability statement}
 The Chinese MNIST dataset and Hard drive dataset that support the findings of this study are openly available in reference \cite{Blackblaze} and in reference, \cite{kaggle_chinese_mnist} respectively. 
\bibliographystyle{apalike}
\bibliography{main}
\end{document}